\documentclass[aps,prb,twocolumn,bibnotes]{revtex4-2} 

\usepackage{lipsum}


\usepackage{amsmath}
\allowdisplaybreaks
\usepackage{amssymb}
\usepackage{amsthm}
\usepackage{bbold,bm}
\usepackage[cal=boondox]{mathalfa}
\usepackage{graphicx}

\usepackage{hyperref}

\usepackage{scalerel}
\newlength\bshft
\bshft=.18pt\relax
\def\mbold#1{\ThisStyle{\ooalign{$\SavedStyle#1$\cr%
  \kern-\bshft$\SavedStyle#1$\cr%
  \kern\bshft$\SavedStyle#1$}}}

\newcommand\vex[1]{\mathbf{#1}}

\def\bra#1{\mathinner{\langle{#1}|}}
\def\ket#1{\mathinner{|{#1}\rangle}}

\def\braket#1{\mathinner{\langle{#1}\rangle}}

\def\tr{\mathrm{tr}}

\def\id{\mathbb{1}} 

\def\Pexp{\mathrm{P}\hspace{-0.5mm}\exp}


\newtheorem{WNth}{Theorem}
\newtheorem{WNcr}{Corollary}[WNth]

\usepackage{xcolor}

\begin{document} 

\title{Wilson loop invariants and bulk-boundary correspondence in
higher-order topological insulators with anticommuting mirror symmetries} 

\author{Suman Aich}
\email{aichs@iu.edu}
\affiliation{Department of Physics, Indiana University, Bloomington, Indiana 47405, USA}
\affiliation{Quantum Science and Engineering Center, Indiana University, Bloomington, Indiana 47405, USA}

\author{Babak Seradjeh}
\email{babaks@iu.edu}
\affiliation{Department of Physics, Indiana University, Bloomington, Indiana 47405, USA}
\affiliation{Quantum Science and Engineering Center, Indiana University, Bloomington, Indiana 47405, USA}

\begin{abstract}
We investigate the higher-order bulk-boundary correspondence in a family of chiral-symmetric Bloch Hamiltonians with anticommuting mirror symmetries. These models generalize the $\pi$-flux square lattice, the prototypical topological quadrupole insulator, and include both separable and nonseparable models with extended and diagonal hopping. For separable systems, the product of subsystem chiral winding numbers correctly predicts the number of zero-energy corner states. However, this invariant fails in nonseparable models, motivating the development of new momentum-space diagnostics. We introduce gauge-independent mirror-filtered winding numbers for Wannier Hamiltonians, constructed by projecting mirror eigenstates onto the occupied subspace. Furthermore, by adapting periodicized Wilson lines from chiral Floquet theory to the case with momentum-dependent chiral operator, we define new invariants associated directly with Wannier gaps. These invariants provide a detailed characterization of Wannier band topology.  Our results clarify the interplay between chiral symmetry, mirror symmetries, and Wilson loops in higher-order topological phases and point to open challenges in formulating momentum-space invariants for general nonseparable models. 
\end{abstract}

{
\let\clearpage\relax
\maketitle
}

\section{Introduction}\label{sec:intro}
The study of higher-order topological phases has developed significantly since the formulation of quantized multipole insulators~\cite{Benalcazar_2017a,Benalcazar_2017b} followed by their experimental realizations in various platforms~\cite{Serra-Garcia_2018,Peterson_2018,Noh_2018,Imhof_2018,Fan_2019,Mittal_2019,Zhou_2020,Chen_2020,He_2020}. 
The prototypical model for a topological quadrupole insulator~\cite{Benalcazar_2017a,Benalcazar_2017b} is the same as the $\pi$-flux square model~\cite{Seradjeh_2008b}, which, along with other chiral-symmetric tight-binding models such as the Kekul\'e-patterned hexagonal model~\cite{Hou_2007}, have been the subject of earlier studies on the zero-energy states bound to vortex defects and the topological invariants characterizing them. The latter are the same invariants obtained from the index of the low-energy theory of Dirac fermions coupled to the vortex sector of a complex scalar field~\cite{Jackiw_1981a,Weinberg_1981,Cugliandolo_1989,Seradjeh_2008a,Seradjeh_2008d,Ryu_2009}. The study of defect bound states was later extended to all symmetry classes, both in static~\cite{Teo_2010} and periodically driven Floquet Hamiltonians~\cite{Yao_2017}.

Following the initial development of ideas on crystalline topological insulators~\cite{Fu_2011,Slager_2012,Neupert_2018,Kruthoff_2017,Khalaf_2018}, these and subsequent works~\cite{Schindler_2018,Langbehn_2017,Geier_2018,Trifunovic_2019,Khalaf_2021,Roy_2021} have established that crystalline symmetries can protect boundary states of co-dimension greater than one, giving rise to corner and hinge modes in the absence of edge or surface states corresponding to first-order topological invariants~\cite{Xie_2021,Yang_2024}. The original studies formulated the topological multipole insulators in terms of nested Wilson loops and the polarization of the Wannier bands~\cite{Fidkowski_2011,Alexandradinata_2014,Bradlyn_2019,Kooi_2019,Bouhon_2019,Hwang_2019,Henke_2021}. While nested Wilson loops capture multipole moments in models with appropriate crystalline symmetries, there is also a rich structure of higher-order topological phases and their invariants beyond multipole moments.

For example, in the $\pi$-flux square lattice, the prototypical topological quadrupole insulator, chiral symmetry protects a $\mathbb{Z}$-valued invariant that furnishes the bulk-boundary correspondence beyond the $\mathbb{Z}_2$-valued quadrupole moment of nested Wilson loops. The $\mathbb{Z}_2$ invariants are protected by mirror symmetries alone. Since this model has several symmetries, including discrete symmetries in the BDI class as well as mirror and rotational symmetries, it is natural to ask which symmetries protect and what invariants characterize the higher-order bulk-boundary correspondence.

In this work, we investigate a family of such models in two 
dimensions, constructed from tensor products of lower-dimensional chiral Hamiltonians and endowed with mirror and, in special cases, fourfold rotational symmetries. These models include both separable systems, whose topology is set by the independent winding of each subsystem, and nonseparable systems with mixed hopping processes. We examine their bulk and edge spectral structures, Wannier spectra of Wilson loops, and the invariants characterizing both.

For separable models, it has been shown~\cite{Hayashi_2018,Hayashi_2019,Okugawa_2019,Li_2018,Bomantara_2019,Yang2023} that the product of the one-dimensional chiral winding numbers of the lower-dimensional Hamiltonians of the tensor-product correctly predicts the number of zero-energy corner states, and remains robust so long as edges are gapped and the chiral symmetries are preserved. 

However, these invariants cannot be constructed in nonseparable models with diagonal hopping elements. Ref.~\cite{Benalcazar_2022} introduced real-space chiral winding numbers using a Bott index for the real-space representation of the quadrupole operator in a system with periodic boundary conditions, which yields a $\mathbb{Z}$ invariant matching the total number of corner-bound states, even in nonseparable models. However, it is not clear how to formulate this invariant in momentum space for the Bloch Hamiltonian, at least in part due to the fact that the real space quadrupole operator violates periodic boundary conditions, and so cannot be used with Bloch Hamiltonians. 

A central focus of this study is the role of chiral and mirror symmetries in defining appropriate bulk invariants. We construct gauge-independent winding numbers that characterize the Wannier Hamiltonians. We do so by introducing a gauge-independent ``mirror-filtered'' basis for the occupied bands, constructed by projecting mirror eigenstates onto the occupied subspace. Building on these constructions, we also define a complementary set of invariants directly associated with Wannier gaps at $\varepsilon = 0$ and $\pi$. These are formulated using periodicized Wilson lines, in analogy with invariants of Floquet systems. We extend the results from Floquet literature~\cite{Yao_2017} to incorporate the momentum-dependent chiral symmetry inherited from mirror projected onto the occupied subspace. 

The resulting invariants consistently relate to the mirror-filtered ones through the same linear combination found for chiral Floquet invariants~\cite{Asboth_2014}. Together, this set of gauge-independent topological invariants provide a detailed topological characterization of the Wannier Hamiltonians and their spectra. While these quantities furnish a bulk-boundary correspondence for Wilson loops, and in several cases also characterize the higher-order bulk-boundary correspondence, we have not found a simple mapping to do so in general. Thus, we shed light on the subtleties of relating nested Wilson loops and Wannier topology to higher-order bulk-boundary correspondence.

The paper is organized as follows. In Section~\ref{sec:models} we introduce the family of models and their symmetries. In Section~\ref{sec:Wgap}, we review the definitions of Wilson loops and present a comparison between the gap closings in their spectra and those in the bulk and edge spectra of the system. In Section~\ref{sec:symm}, we summarize the chiral invariants characterizing the Bloch Hamiltonian and construct the gauge-independent mirror-filtered invariants of the Wannier Hamiltonians and their spectra. 
We close in Section~\ref{sec:sum} with a summary and outlook. We also collect several relevant results on chiral winding numbers, including the consequence of various symmetries, in the Appendix.

\section{Model Hamiltonians}\label{sec:models}

We will first consider a family of two-dimensional (2D) Bloch Hamiltonians $H(\vex k)$ over the Brillouin zone $\text{BZ} = \{\vex k = (k_1,k_2) : -\pi < k_j \leq \pi \}$ ($j=1,2$), with 
\begin{equation}\label{eq:H0}
H(\vex k) = h_1(\vex k)\otimes \mathbb{1}  + c_1\otimes h_2(\vex k).
\end{equation}
Here, $h_j$ are chiral Hamiltonians with chiral operators $c_j$ (for which $c_j=c_j^\dagger=c_j^{-1}$ and $c_j h_j c_j = - h_j$) with additional mirror symmetries $m_{j l}$, reflecting $k_l \to - k_l$, that commute with each other $[m_{j1},m_{j2}]=0$.
We assume the commutation relations,
\begin{equation}\label{eq:CR12}
\{c_1, m_{11}\} = \{c_2, m_{22}\} = [c_1, m_{12}] = [c_2, m_{21}] =  0.
\end{equation}
Under this algebra, we can take $h_j = m_{jj} d_{je} + im_{jj}c_j d_{jo}$ where $d_{jo}$ is odd under $m_{jj}$ and both $d_{je}$ and $d_{jo}$ are otherwise even under reflections and commute with $c_j$ and $m_{jl}$. Typically, we simply have $m_{12}=m_{21}=\id$.  Table~\ref{table:hj_operators} summarizes the action of various $c_j$ and $m_{jl}$ operators on $h_j$ as well as their mutual commutation relations.

\begin{table}[t]
    \centering
    
    \bgroup
    \def\arraystretch{1.5}
    \begin{tabular}{c c c}
    \hline\hline
    {Operator} & $h_j(\vex k)$ & $m_{jr}$ \\
    \hline
    $c_j$ & $\{c_j, h_j(\vex k)\} = 0$ & $c_jm_{jr} = (-1)^{\delta_{jr}} m_{jr}c_j$ \\
    $m_{jl}$ & $m_{jl}h_j(\vex k) = h_j(\mathsf{m}_{l}\vex k) m_{jl}$ & $[m_{jl},m_{jr}] = 0$ \\
    
    \hline \hline
    \end{tabular}
    \egroup
    \caption{ Summary of the action of chiral and mirror operators on $h_j(\vex k)$ in Eq.~\eqref{eq:H0} and their commutation relations. The mirror-reflected momentum $(\mathsf{m}_{l}\vex k)_r = (-1)^{\delta_{lr}} k_r $}.
    \label{table:hj_operators}
\end{table}

The Hamiltonians~\eqref{eq:H0} are chiral under $C = C_1 C_2 = c_1\otimes c_2$, where $C_1=c_1\otimes \id$ and $C_2=\id\otimes c_2$ are the chiral operators of each of the terms in Eq.~\eqref{eq:H0} separately. Note that $[C,C_j] = 0$. Assuming $H$ is time-reversal symmetric under an antiunitary operator $\Theta$ with $\Theta^2=1$, such that $\Theta H(\vex k)\Theta = H(-\vex k)$, they would also preserve particle-hole symmetry $C\Theta$ and, thus, belong to the BDI class. In the following, we assume to work in the basis where $\Theta = K$, the complex conjugation.

The Hamiltonians~\eqref{eq:H0} are also mirror-symmetric, $M_j H(\vex k) M_j = H(\mathsf{M}_j \vex k)$, under $M_1 = m_{11}\otimes c_2 m_{21}$ and $M_2 = m_{12}\otimes m_{22}$, which all \emph{anticommute}:
\begin{align}
\{M_1,M_2\} = \{M_1,C\} = \{C,M_2\} = 0.
\end{align}
Here, $ \mathsf{M}_1\vex k = (-k_1,k_2)$ and $\mathsf{M}_2\vex k = (k_1,-k_2)$. Since $\vex k \to -\vex k$ under $\Theta$, we see that $m_{jl}$ and $c_j$, and therefore $M_j$, are real.

The inversion operator $I= iM_2M_1$ is also a symmetry $IH(\vex k)I = H(-\vex k)$. It commutes with the chiral operator $C$ and anticommutes with the mirror operators $M_j$. The factor of $i$ in this definition is needed to make sure $I$ is Hermitian, $I^\dagger= I = I^{-1}$. So, $\Sigma := I\Theta$ is an antiunitary symmetry, $\Sigma H(\vex k)\Sigma = H(\vex k)$, with $\Sigma^2 = -1$. By Kramers' theorem, the energy bands of $H$ are doubly degenerate at every $\vex k$.

For the special case $h_1(k_1,k_2) = h_2(k_2,k_1)\equiv h(\vex k)$, the models become $C_4$ symmetric with additional anticommuting diagonal mirror symmetries $M'_1 = e^{i\frac\pi2 IP_C} C_2$ and $M'_2 = e^{i\frac\pi2 I (1-P_C)} C_1$, where the projector $P_C =\frac12(1+C)$, such that $\mathsf{M}'_j(k_1,k_2)=(-1)^j(k_2,k_1)$ and
$
    M'_j H(\vex k) M'_j = H(\mathsf{M}'_j \vex k).
$
We note that, as can be directly inspected, $\{M'_1,M'_2\}=0$, however, interestingly, the diagonal mirror symmetries and the chiral operator \emph{commute}, $[C,M'_j] = 0$. Moreover, $I= i M'_2M'_1$, therefore, $M_j M'_l M_j = (-1)^{\bar j} M'_{\bar l}$ and $M'_j M_l M'_j =(-1)^j M_{\bar l}$ with $\bar j \neq j$. In this case the symmetry operator $R = e^{i\pi/4} M_1 M'_1 = e^{i\pi/4} M_2 M'_2$ is the $\pi/2$ rotation such that $R^4 = 1$ and 
$R^\dagger M_j R = (-1)^{j} M_{\bar j}$, and similarly $R^\dagger M'_j R = (-1)^{j} M'_{\bar j}$.  Table~\ref{table:H_operators} summarizes the action of the $C$, $M_j$ and $M'_{j}$ operators on $H$ as well as their mutual commutation relations.

\begin{table*}[t]
    \centering
    \bgroup
    \def\arraystretch{1.5}
    \begin{tabular}{c c c c}
    \hline\hline
    {Operator} & $H(\vex k)$ & $M_l$ & $M_l'$ \\
    \hline
    $C$ & $\{C, H(\vex k)\} = 0$ & $\{C, M_l\} = 0$ & $[C, M_l'] = 0$ \\
    $M_j$ & $M_j H(\vex k) = H(\mathsf{M}_j\vex k) M_j$ & $\{M_j, M_l\} = 2\delta_{jl}$ & $M_j M_l' = (-1)^{\bar j}M_{\bar l}'M_j$ \\
    $M_j'$ & $M'_j H(\vex k) = H(\mathsf{M}'_j\vex k) M'_j$ & $M'_j M_l = (-1)^j M_{\bar l} M_j'$ & $\{M_j',M_l'\} = 2\delta_{jl}$ \\
    \hline \hline
    \end{tabular}
    \egroup
    \caption{ Summary of the action of chiral and mirror operators on $H(\vex k)$ and their commutation relations. The mirror-reflected momenta $(\mathsf{M}_j\vex k)_l = (-1)^{\delta_{jl}} k_l$ and $(\mathsf{M}'_j \vex k)_l = (-1)^j k_{\bar l}$. The relation for $M'_j$ and $H(\vex k)$ holds in a $C_4$ symmetric model.}
    \label{table:H_operators}
\end{table*}

As a concrete example, we choose $h_j = d_{je}\sigma_x + d_{jo}\sigma_y$ and denote 
$d_{je}+id_{jo} \equiv d_j$ to be an even function of $k_{\bar j}$. Then, the symmetry operators are  $c_1 = c_2 = \sigma_z$, $m_{j \bar j} = \mathbb{1}$, $m_{j j} = \sigma_x$, so $C = \sigma_z\otimes\sigma_z$, and
\begin{align}
M_1 &= \sigma_x \otimes \sigma_z, \quad
M_2 = \mathbb{1}\otimes\sigma_x, \quad
I = \sigma_x\otimes \sigma_y, \\ 
M'_1 &= 
\frac12(\id\otimes \sigma_z - \sigma_z \otimes \id - \sigma_x \otimes \sigma_x + \sigma_y \otimes \sigma_y ), \\
M'_2 &= 
\frac12(\id \otimes \sigma_z + \sigma_z \otimes \id + \sigma_x \otimes \sigma_x + \sigma_y \otimes \sigma_y).
\end{align}
In particular, we study models of the form,
\begin{equation}\label{eq:ex}
    d_j = 1+f_j(n_{\bar j}k_{\bar j}) + [1-f_j(n_{\bar j}k_{\bar j})] e^{i n_j k_j},
\end{equation}
with $n_j\in\mathbb{Z}$ and $f_j$ real-valued even functions of their argument. For $n_1=n_2 \equiv n$ and $f_1(k)=f_2(k) \equiv f(k)$, we have $d_1 = d_2 \equiv d$ and the model is $C_4$ symmetric. Specifically, we choose
\begin{equation}\label{eq:ff}
f_j(k) = \bar f_j + \delta f_j \cos k.
\end{equation}
In the following, we will assume $|f_j(k)|<1$ for simplicity. With this choice $|d_j(k_j=\pi/n_j)| < |d_j(k_j=0)|$.

The case with $\delta f_j=0$ is an example of a ``separable model" when $h_j$ depends only on the component of momentum $k_j$. This entails hopping only to the $n_j$ neighbor in each direction $j$. For $n_1=n_2=1$ in 2d, we find the $\pi$-flux square lattice model (a 2d extension of the Su-Schrieffer-Heeger model)~\cite{Seradjeh_2008b} and a topological quadrupole insulator~\cite{Benalcazar_2017a}, the first example of a higher-order topological insulator. 

The general case of ``nonseparable models,'' where $f_j$ is not constant, includes mixed (diagonal) hopping terms. Such nonseparable models with longer-ranged hopping can arise, for example, as the Floquet Hamiltonians of periodically driven systems~\cite{Rodriguez-Vega_2019,Zhou_2025}.  We show the schematics of the tight-binding model in Eq.~\eqref{eq:ff} with $\delta f_j=0$ (separable) and $\delta f_j\neq0$ (nonseparable) in Fig.~\ref{fig:unit cell}(a) and~\ref{fig:unit cell}(b), respectively. We also compare our results to a nonseparable model in the literature~\cite{Benalcazar_2022}, with
\begin{equation}\label{eq:bc}
d_j = 1+ \bar f_{j} + [1-\bar f_{j}-\delta f_j \cos(k_{\bar j})] e^{ik_j} + (1-\bar g_{j}) e^{2ik_j},
\end{equation}
where nearest ($1\pm \bar f_{j}$), next-nearest ($\pm \delta f_j$), and next-next-nearest ($1-\bar g_j$) hopping elements are combined. 

\begin{figure}[b]
    \centering
    \includegraphics[width=\linewidth]{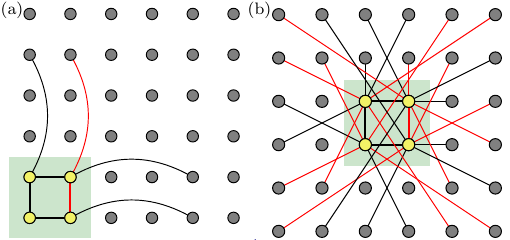}
    \caption{ Schematics of the tight-binding (a) separable model in Eq.~\eqref{eq:ff} ($\delta f_1 = \delta f_2=0$) with $n_1=n_2=2$ and (b) non-separable model in Eq.~\eqref{eq:ff} ($\delta f_1 = \delta f_2\neq0$) with $n_1=n_2=1$. The unit cell is the green square. Positive and negative hopping amplitudes are shown in black and red, respectively.
    }
    \label{fig:unit cell}
\end{figure}

\section{Wilson loops and edge spectra}\label{sec:Wgap}
Denoting the Bloch states in band $n$ as $\ket{u_n(\vex k)}$, the Wilson line operator in the subspace $S$ spanned by a subset of bands over a line $\ell \subset \text{BZ}$ starting at $\vex k_0$ and ending in $\vex k_1$ is the unitary
\begin{align}
 \overline{W}_{S\ell}(\vex k_0,\vex k_1) 
    = \prod^{\xrightarrow{~~~}}_{\vex q \in \ell} P_S(\vex q) 
    = \Pexp \left[i \int_\ell \vex{A}(\vex q)\cdot d\vex q\right],
\end{align}
where $d\vex q$ is the tangent to the loop, the projection $P_S(\vex k) = \sum_{n} \ket{u_n(\vex k)}\bra{u_n(\vex k)}$, $\Pexp$ is the path-ordered exponential, and $\vex{A}(\vex q) = \sum_{mn}\ket{u_m(\vex q)} [\mathbb{A}(\vex q)]^{mn} \bra{ u_n(\vex q)}$ is the non-Abelian Berry connection with projected matrix elements $[\mathbb{A}(\vex q)]^{mn} = -i\braket{ u_m(\vex q)|\partial_{\vex q} u_n(\vex q)}$. Note that we take the order of the operator product to be from left to right. For a loop $\ell$ starting at $\vex k_0$ and ending at $\vex k_1 = \vex k_0 + \vex G \equiv \vex k_0$, where $\vex G$ is a reciprocal lattice vector, we denote the Wilson loop operator $W_{S\ell}(\vex k_0) = \overline{W}_{S\ell}(\vex k_0,\vex k_0+\vex G)$.

We may express the Wilson line operator as $\overline{W}_{S\ell} = \sum_{mn}\ket{u_m(\vex k_0)} [\overline{\mathbb{W}}(\vex k_0, \vex k_1)]^{mn} \bra{ u_n(\vex k_1)}$ with the projected Wilson line \emph{matrix}  $\overline{\mathbb{W}}= \Pexp \left[i \int_\ell \mathbb{A}\right]$. Under a unitary gauge transformation of the Bloch-band bundle in $S$, $\ket{u'_n(\vex k)} = \sum_m [\mathbb{g(\vex k)}]^{mn} \ket{u_m(\vex k)}$, the Berry connection and Wilson line matrices transform as
\begin{align}
\mathbb{A}'(\vex k) &= \mathbb{g}^\dagger(\vex k) \mathbb{A}(\vex k) \mathbb{g}(\vex k)  - i \mathbb{g}^\dagger(\vex k) \partial_{\vex k} \mathbb{g}(\vex k) \label{eq:A'A} \\
\overline{\mathbb{W}}'(\vex k_0,\vex k_1) &= \mathbb{g}^\dagger(\vex k_0) \overline{\mathbb{W}}(\vex k_0,\vex k_1) \mathbb{g}(\vex k_1).
\end{align}
Thus, while the Wilson line matrix is transformed by the gauge, the  Wilson line operator itself is independent of the gauge. Note that the Wilson loop matrix is mapped unitarily under the gauge transformation, $\mathbb{W}'(\vex k) = \mathbb{g}^\dagger(\vex k) \mathbb{W}(\vex k) \mathbb{g}(\vex k)$. 

Taking the occupied bands, $S \equiv \text{image}(P)$, where $P$ is the projector to the occupied bands (i.e. with negative energies), and the loops $\ell_j=\{\vex k : -\pi < k_j \leq \pi\}$, we denote $W_{S\ell_j}\equiv W_{j}$. Similarly, since the projector on the unoccupied bands $1-P = CPC$, we find the Wilson loop operator of the unoccupied bands to be $CW_{j}C$.

We would also like to define the Wannier Hamiltonians $A^W_{\ell}(\vex k) \stackrel{?}{=} -i\ln W_{\ell}(\vex k)$. We must note, however, that since $W_j = P W_j P$ is a projected operator, it is not unitary and its logarithm cannot be defined in the full Hilbert space. Instead, since the Wilson loop matrix is unitary, we may write $A_\ell^W(\vex k) = \sum_{mn} \ket{u_n(\vex k)} [\mathbb{A}_\ell^W(\vex k)]^{nm}  \bra{u_m(\vex k)}$ and define the Wannier Hamiltonian matrix
\begin{equation}
\mathbb{A}^W_{\ell}(\vex k) = -i\ln \mathbb{W}_\ell(\vex k),    
\end{equation}
directly in the projected space.

Wannier Hamiltonians have been argued to have the same topology as the original Hamiltonian in cylindrical geometry with an edge along $k_{\bar j}$~\cite{Fidkowski_2011}. 
To investigate this equivalence, we first  compare the spectrum of the original and the Wannier Hamiltonians. Since the Wannier Hamiltonian is a modular operator, it has two relevant gaps, $\varrho_{j\varepsilon}$, at Wannier eigenvalues $\varepsilon = 0$ and $\pi$, which mark the Wannier centers $\varepsilon/2\pi$ in the unit cell. In Figs.~\ref{fig:BEWsep}-\ref{fig:BEW-BC}~\cite{Aich_2025c}, we show the bulk and edge spectral gaps of the models in Eqs.~\eqref{eq:ff} and~\eqref{eq:bc}, calculated from the Bloch Hamiltonians (periodic boundary conditions) and their unidirectional discretization (cylindrical boundary conditions), respectively, along with the Wannier gaps of the corresponding Wilson loops. 

\begin{figure}
    \centering
    \includegraphics[width=\linewidth]{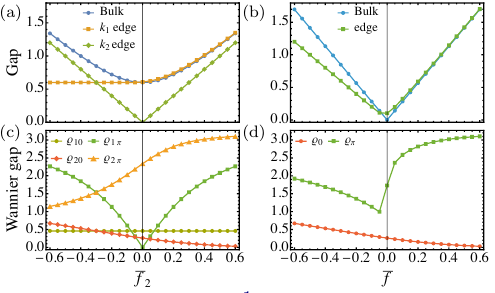}
    \caption{(a),(b) Bulk and edge gaps and(c),(d) Wannier gaps for the separable model in Eq.~\eqref{eq:ff} ($\delta f_1 = \delta f_2=0$) with $\bar f_1=-0.3$ and variable $\bar f_2$ (left panels), and $C_4$ symmetric model with variable $\bar f$ (right panels). In both cases $n_1=n_2=2$.  The bulk and Wannier gaps are calculated with periodic boundaries, and the edge gaps are calculated with cylindrical boundaries with 30 unit cells in the open direction.
 }
    \label{fig:BEWsep}
\end{figure}

\begin{figure}
    \centering
    \includegraphics[width=\linewidth]{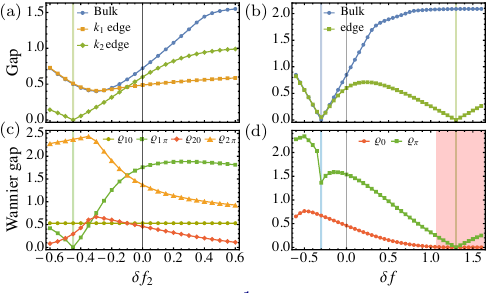}
    \caption{(a),(b) Bulk and edge gaps and (c),(d) Wannier gaps for the nonseparable model in Eq.~\eqref{eq:ff} with $\delta f_1 = -0.1$ and variable $\delta f_2$ (left panels), and for the $C_4$ symmetric model with variable $\delta f$ (right panels). In both cases $n_1=n_2=1$, and $\bar f_1=\bar f_2=-0.3$. The vertical blue and green lines show bulk and edge gap closings, respectively. The shaded region in (d) marks the vanishingly small value of the Wannier gap.  The bulk and Wannier gaps are calculated with periodic boundaries, and the edge gaps are calculated with cylindrical boundaries with 30 unit cells in the open direction.
    }
    \label{fig:BEWnonsep}
\end{figure}

Indeed, the spectral edge gap and the Wannier gap close together. Interestingly, in all the cases we have studied the Wannier gap closing that accompanies the spectral edge gap closing along $k_{\bar j}$ is $\varrho_{j\pi}$~\cite{Khalaf_2021}.
However, the relationship between the bulk and edge, and Wannier spectra is more complex. For example, while bulk gap closings are not accompanied by Wannier gap closings, they coincide with kinks in the Wannier spectrum. Also, the Wannier gaps can become anomalously small (shaded areas in Figs.~\ref{fig:BEWnonsep} and~\ref{fig:BEW-BC}) or show kinks and even gap closings unrelated to the spectral gaps [dashed lines in Fig.~\ref{fig:BEW-BC}(d)].

Now, we turn to characterizing the topology of the Hamiltonain and its relation to Wilson loops.

\section{Topological invariants and bulk-boundary correspondence}\label{sec:symm}

\subsection{Chiral winding numbers}

A chiral Hamiltonian $h(\vex k)$ with chiral operator $c$ is characterized by a topological winding number over a loop $\ell$ in the Brillouin zone, given by
\begin{equation}
w_{\ell}[h,c] = \frac1{2\pi i} \oint_\ell \partial_k \ln\det[d(k)] dk,
\end{equation}
where $d(k)$ is the off-diagonal block of $h(k)$ in the chiral basis, where $c = \sigma_z\otimes \id$. This winding number can also be expressed as
\begin{equation}
w_{\ell}[h,c] = \frac1{2\pi i} \oint_\ell \tr[p_c \, \underline h\,\partial_k \underline h] dk,
\end{equation}
where $\underline h = 1 - 2p$ is the flattened Hamiltonian with $p$ the projector to the occupied band with negative energies, and $p_c = \frac12(1+c)$ is the projector to the positive chiral eigenspace. This is proved in the Appendix, where we collect several relevant results on winding numbers.

\begin{figure}
    \centering
    \includegraphics[width=\linewidth]{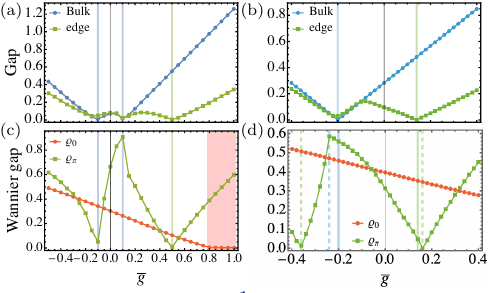}
    \caption{(a),(b) Bulk and edge gaps and (c),(d) Wannier gaps for the $C_4$ symmetric model~\eqref{eq:bc} with variable $\bar g$ and $\bar f =0.11$, $\delta f=1.11$ (left panels), and $\bar f =-0.3$, $\delta f=0.6$ (right panels). The vertical blue and green lines show bulk and edge gap closings, respectively. The vertical dashed lines in (d) show the kinks observed in the Wannier gap. The shaded region in (c) marks the vanishingly small value of the Wannier gap.  The bulk and Wannier gaps are calculated with periodic boundaries, and the edge gaps are calculated with cylindrical boundaries with 30 unit cells in the open direction.
    }
    \label{fig:BEW-BC}
\end{figure}

The commutation relations~\eqref{eq:CR12} ensure that $h_1$ and $h_2$ can have nontrivial winding numbers $\nu_j(k_{\bar j}) = w_{\ell_j}[h_j,c_j]$. 
In our models, $\nu_j = \frac1{2\pi} \oint (\partial_j \arg d_j) dk_j$, where we have denoted $\partial_j \equiv \partial/\partial k_j$. For separable models, $\nu_1$ and $\nu_2$ are constant, and the topological invariant $\nu := \nu_1\nu_2$ furnishes a higher-order bulk-boundary correspondence with $4|\nu|$ zero-energy corner-bound states ($|\nu|$ per corner)~\cite{Okugawa_2019, Hayashi_2018, Hayashi_2019}. 
The invariant $\nu=\nu_1\nu_2$ is protected as long as the chiral symmetry is preserved and the energy gap of the edge states with open boundary conditions is not closed.

In $C_4$ symmetric models the Hamiltonian and the diagonal mirror symmetries commute at every point along $\ell'_{j}$, the invariant line of the  diagonal mirror operators, $M'_{j} H(\vex k_{*j}) M'_{j} = H(\vex k_{*j})$ where $\vex k'_{*j} \in \ell'_j = \{(q,(-1)^jq): -\pi < q \leq \pi\}$. Therefore, although $w_{\ell'_j}[H,C]=0$ since $M'_{\bar j}$ is a symmetry that commutes with $C$ and inverts $\ell'_j$ (see Corollary~\ref{cr:MC} in the Appendix), we may define a ``mirror-graded'' invariant $\nu'_{j} = \frac12 w_{\ell'_j}[H, M'_j C]$. Note that $w_{\ell'_j}[H, M'_j C]$ is guaranteed to be even by the vanishing of $w_{\ell'_j}[H,C]$, since the latter can be thought of the sum of the winding numbers of $H$ with respect to $C$ in the $\pm1$ eigenspaces of $M'_j$ and the former as their difference. The bulk invariant $\nu'_j$ corresponds to the number of counter-propagating pairs of edge modes with opposite eigenvalues of $M'_j$ in cylindrical boundary conditions with an open edge perpendicular to $\ell'_j$. In an open geometry, it can be shown that $\nu'_j$ corresponds to the difference between the number of corner bound states with opposite eigenvalues of $M'_j$~\cite{Rodriguez-Vega_2019,Aich_2025a}.

Which invariant, $\nu$ or $\nu'$, furnishes the correct bulk-boundary correspondence in a $C_4$ symmetric model? We showed in Ref.~\cite{Aich_2025a}, that the correct bulk-boundary correspondence is furnished by $\nu$ and not $\nu'$: the number of corner-bound states is not captured by the diagonal mirror eigenvalues. Only $|\nu'|$ out of $|\nu|$ corner states have the same diagonal mirror eigenvalue, and each of the remaining $\frac12|\nu'|(|\nu'|-1)$ \emph{pairs} of bound states have opposite diagonal mirror eigenvalues. However, all $|\nu| = |\nu'|^2$ bound states have the same chiral eigenvalue.

The general case of non-separable models, with or without $C_4$ symmetry, presents a further challenge for establishing the bulk-boundary correspondence since $\nu_1$ and $\nu_2$ may no longer be constants. Each $\nu_j$ vanishes at a Dirac point of $h_j$, and the nonzero values of $\nu_j(k_{\bar j})$ correspond to flat, zero-energy edge bands of $h_j$ in open boundary conditions.

\subsection{Mirror-filtered invariants of the Wannier Hamiltonian}

Nested Wilson loops, that is, the Wilson loop $W_{\bar j j}$ of the Wannier spectrum of $W_j$, are shown to characterize the multipole moments of the topological quadrupole insulator. However, this invariant only admits a $\mathbb{Z}_2$ value and is not sufficient to characterize the $\mathbb{Z}$-valued higher-order topology of our models. It is natural to ask whether, instead of nested Wilson loops, the topology of $W_j$ can be characterized and be related to the higher-order bulk-boundary correspondence.

Under mirror symmetry $M$, such that $MH(\vex k) = H(\mathsf{M}\vex k)M$, the Wilson line operators satisfy the relation
\begin{align}\label{eq:MWlineM}
    M \overline{W}_{\ell}(\vex k_0, \vex k_1)M = \overline{W}_{\mathsf{M}\ell}(\mathsf{M}\vex k_0,\mathsf{M}\vex k_1),
\end{align}
Therefore, using $[\overline W_\ell(\vex k_0,\vex k_1)]^\dagger = \overline W_{-\ell}(\vex k_1,\vex k_0)$, we have for the Wilson loop operator starting at mirror-invariant momenta $\mathsf{M}_j\vex k_{*j} = \vex k_{*j}$,
\begin{align}\label{eq:MWM}
    {M}_j W_{j}(\vex k_{*j})M_j = [W_{j}(\vex k_{*j})]^\dagger.
\end{align}
So, it seems the Wannier Hamiltonian anti-commutes with the mirror operator and, therefore, must be characterized by a winding number with the mirror operators as the chiral operators. However, caution is necessary in defining such invariants since the relevant operators are defined in a projected subspace. Indeed, since $A^W = PA^WP$, it cannot simply anticommute with any operator in the full Hilbert space. This is because of the zero eigenvalues of $A^W$ in the projected space of $1-P$, which cause the winding of $A^W$ to vanish. Instead we find that the Wannier Hamiltonian matrix $\mathbb{A}^W_j(\vex k_{*j})$ anticommutes with the projected mirror operator matrix 
\begin{equation}
[\mathbb{M}_{j}(\vex k)]^{nm} = \braket{u_n(\mathsf{M}_j\vex k)|M_j u_m(\vex k)},    
\end{equation}
at $\vex k = \vex k_{*j}$. Therefore, the chiral operator itself is momentum-dependent.  We note that the local gauge symmetry in the occupied bands is expressed in terms of the Berry connection satisfying Eq.~\eqref{eq:A'A}. By contrast, the momentum-dependent chiral symmetry transforms as $\mathbb{M}_j(\mathbf{k}_{*j}) \mapsto \mathbb{g}^\dagger(\mathbf{k}_{*j}) \mathbb{M}_j(\mathbf{k}_{*j}) \mathbb{g}(\mathbf{k}_{*j})$. Therefore, it represents a projected symmetry rather than a local gauge.

We can try to address this challenge, for example, by extending the Wilson loop operator to the full Hilbert space, where we can use the momentum-independent mirror operators. A simple extension such as replacing $W_j(\vex k)$ with $W_j(\vex k) + 1-P(\vex k)$, which adds an identity matrix in the unoccupied bands at $\vex k$, does not work since the Wannier Hamiltonian in the unoccupied bands vanishes (or is constant for a general branch cut of the logarithm) and, again, cannot simply anticommute with $M$. 

A more promising extension is to replace $W_j(\vex k)$ with $W_j(\vex k)+e^{i\beta}CW_j(\vex k)C$ that supplements it with the Wilson loop operator of the unoccupied bands, allowing for a phase shift $e^{i\beta}$. For Eq.~\eqref{eq:MWM} to hold for this extended Wilson loop, we must have $e^{i\beta}=\pm1$. The Wannier Hamiltonian can now be defined in the full Hilbert space and is block-diagonal in the occupied-unoccupied subspaces. 

While this extended Wannier Hamiltonian properly anticommutes with the momentum-independent mirror operator in the full Hilbert space, its chiral winding number with respect to $M_j$ vanishes. This is because for $e^{i\beta}=\pm1$, the Wannier Hamiltonian commutes  with the chiral operator $C$ up to a constant $\beta$. Therefore, by Theorem~\ref{th:uniS} in the Appendix, the winding number vanishes. We have also confirmed this result numerically.

Since $M_j$ is itself block-diagonal in the occupied-unoccupied subspaces, this means that, with respect to $M_j$, the winding of the Wannier Hamiltonian in the occupied subspace is the opposite of its winding in the unoccupied subspace. This may be anticipated from the fact that the unoccupied Wilson loop is obtained by mapping the occupied Wilson loop by the chiral operator $C$, which in turn anticommutes with the mirror operators. Therefore, the mirror eigenspaces of the mirror operator in the occupied bands are mapped, by the chiral operator, to the eigenspaces of the mirror operator with the opposite eigenvalue in the unoccupied bands. This forces the sum of the windings of the occupied and unoccupied Wilson loops to vanish.

Thus, in order to find a nonzero value of the winding, we must \emph{subtract} the two windings. This is similar to the mirror grading used for diagonal mirror operators in $C_4$ symmetric models. To avoid confusion, we call this new invariant the ``mirror-filtered'' winding number.

Equivalently, the mirror-filtered winding can be viewed as the winding of the extended Wannier Hamiltonian with respect to a chiral operator with opposite signs in the occupied and unoccupied subspaces. However, since the choice of these subspaces is momentum-dependent, this means this chiral operator is itself momentum-dependent. Indeed, it is nothing but $M_j\underline{H}(\vex k)$, where $\underline{H}(\vex k)$ is the flattened Hamiltonian. 

Defining a winding number with respect to a momentum-dependent chiral operator requires care. In particular, such a definition requires a judicious choice  of gauge for chiral eigenspaces. In fact, even when the chiral operator is momentum-independent, some expressions of the winding number require a careful choice of gauge. To see this, let us rewrite the winding number as $w = \frac1{2\pi}(\gamma^+-\gamma^-)$, where $\gamma^\pm = \sum_\alpha \oint \braket{a^\pm_\alpha(k)|i\partial_k a^\pm_\alpha(k)} dk$ are the Berry phases accumulated by the chiral eigenstates $C\ket{a^\pm_\alpha} = \pm \ket{a^\pm_\alpha}$ (see Theorem~\ref{th:gpm} in the Appendix for a proof). Since Berry phases are gauge-dependent, their difference can only be gauge-independent if the gauge for $\gamma_\pm$ is chosen consistently. Indeed, we must choose $\ket{a^-_\alpha(k)} := \underline H(k) \ket{a^+_\alpha(k)}$ in this expression. We provide some cautionary examples in the Appendix for momentum-dependent chiral operators. 

The momentum-dependent chiral symmetry of the Wilson loops was identified~\cite{Zhu_2021} as a basis for defining winding numbers corresponding to the mid-gap bound states of Wannier spectra in cylindrical boundary conditions. As we have noted, the choice of gauge of the chiral eigenspaces in this construction is tricky and not obvious. Here we establish a gauge-independent procedure for choosing the chiral eigenstates.

We use parallel transport via Wilson line operators to define 
$\ket{u^{\pm}_{j}(\vex k)} := e^{i\zeta^\pm_j(\vex k, \vex k_0)} \overline{W}_{\bar j}(\vex k,\vex k_0) \ket{u^{\pm}_{j}(\vex k_0)}$, where $\ket{u^{\pm}_{j}(\vex k_0)}$ are eigenstates of the mirror operator $M_j$ in the occupied subspace at a fixed momentum $\vex k_0$, and phases $e^{i\zeta^\pm_j(\vex k, \vex k_0)}$ are determined below.
Given that $M_j \overline{W}_{\bar j}(\vex k, \vex k_0) M_j = \overline{W}_{\bar j}(\mathsf{M}_j\vex k, \mathsf{M}_j\vex k_0)$ and
$M_jP(\vex k) = P(\mathsf{M}_j\vex k)M_j$, and requiring $e^{i\zeta^\pm_j(\vex k, \vex k_0)} = e^{i\zeta^\pm_j(\mathsf{M}_j\vex k, \mathsf{M}_j \vex k_0)}$, we have 
$M_{j} \ket{u^{\pm}_{j}(\vex k)} = \pm \ket{u^{\pm}_{j}({\mathsf{M}_j \vex k})}$. 
In particular, we define the ``mirror-filtered'' basis along mirror-invariant momenta satisfying $M_{j} \ket{u^{\pm}_{j}(\vex k_{*j})} = \pm \ket{u^{\pm}_{j}(\vex k_{*j})}$. At the initial point $\vex k_0$, we may use the eigenvectors of the mirror operator $M_j \ket{m_{j}^\pm} = \pm\ket{m_{j}^\pm}$ to set
$\ket{u^{\pm}_{j}(\vex k_0)} := \frac1{\sqrt{\bra{m_{j}^\pm}P(\vex k_0) \ket{m_{j}^\pm}}}  P(\vex k_0) \ket{m_{j}^\pm}$. Alternatively, we may set $\ket{u^{\pm}_{j}(\vex k_0)}$ as the eigenstates of the Wilson loop $W_{\bar j}(\vex k_0)$, such that  $W_{\bar j}(\vex k_0)\ket{u^\pm_j(\vex k_0)} = \exp({i\lambda^\pm_{\bar j}}) \ket{u^\pm_j(\vex k_0)}$. We set $\zeta^\pm_j(\vex k, \vex k_0) = \frac1{2\pi}(k_{\bar j}-k_{0\bar j}) \lambda^\pm_{\bar j}$ to ensure the basis is periodic. The mirror-filtered basis $\ket{u^\pm_j(\vex k_{*j})}$ form a smooth, gauge-independent, and periodic basis for the occupied subspace along mirror-invariant momenta. 
In our numerics, we choose $k_{0j} = -\pi$.

Now, defining the projected mirror operators $M_{Pj}(\vex k) = P(\mathsf{M}_j\vex k) M_j P(\vex k)$, the mirror-filtered winding numbers can be defined by
\begin{equation}
\eta_{j} = w_{\ell_{\bar j}}[{A}_{j}^W(\vex k_{*j}), {M}_{Pj}(\vex k_{*j})],
\end{equation}
and calculated using the projected mirror matrix, $\mathbb{M}_j$, and the projected Wannier Hamiltonian matrix $\mathbb{A}_j^W$ in the mirror-filtered basis, where $\mathbb{M}_j$ is diagonal and $\mathbb{A}_j^W$ is block off-diagonal. Therefore, $\eta_{j}$ is the winding of the off-diagonal block of $\mathbb{A}_j^W$ in the mirror-filtered basis.

\begin{figure}
    \centering
    \includegraphics[width=\linewidth]{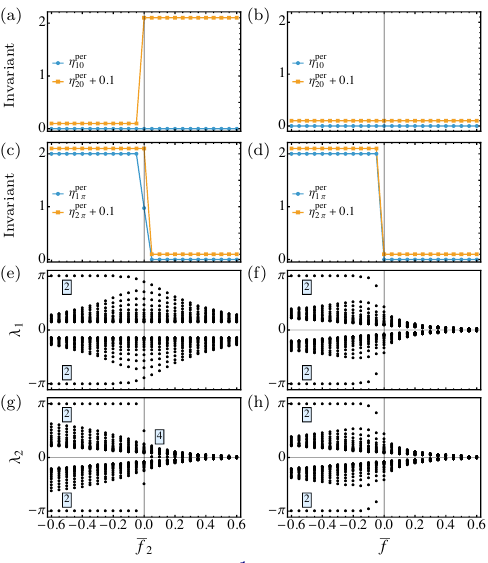}
    \caption{(a)-(d) Bulk-boundary correspondence in the separable model in Eq.~\eqref{eq:ff} ($\delta f_1 = \delta f_2=0$) showing the bulk invariants , and (e)-(h) the Wannier spectra with cylindrical boundaries with 30 unit cells in the open direction. The parameters correspond to those in Fig.~\ref{fig:BEWsep}. In (a)-(d), some values are offset slightly to distinguish the coinciding values.  In (e)-(h), the boxed numbers show the number of midgap Wannier states.}
    \label{fig:BECsep}
\end{figure}

\begin{figure}
    \centering
    \includegraphics[width=\linewidth]{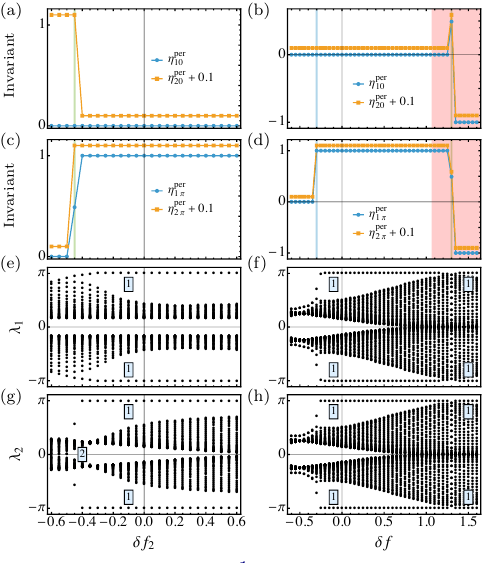}
    \caption{(a)-(d) Bulk-boundary correspondence in the non-separable model in Eq.~\eqref{eq:ff} showing the bulk invariants, and (e)-(h) the Wannier spectra with cylindrical boundaries with 30 unit cells in the open direction. The parameters correspond to those in Fig.~\ref{fig:BEWnonsep}. In (a)-(d), some values are offset slightly to distinguish the coinciding values.  In (e)-(h), the boxed numbers show the number of midgap Wannier states.}
    \label{fig:BECnonsep}
\end{figure}

\subsection{Invariants of the Wannier spectrum}

The mirror-filtered winding numbers can be calculated for loops $\ell_{\bar j}$ starting at $\vex k_{*j}$ with $k_{*j} = 0$ and $\pi$, the two invariant lines of $M_j$. Thus, we find four independent invariants $\eta_{j}(k_{*j})$ that characterize the topology of the Wannier Hamiltonians. It is natural to ask what the relationship is between these invariants and the Wannier gap structure studied in Section~\ref{sec:Wgap}.

The authors of Ref.~\cite{Zhu_2021} made an analogy between the invariants $\eta_{j}(k_{*j})$ and the invariants of the chiral Floquet Hamiltonians~\cite{Asboth_2014} by postulating the existence of a ``faithful gauge'' that yields $\frac12[\eta_{j}(0) + e^{i\varepsilon} \eta_{j}(\pi)]$ as the invariant related to the gap $\varrho_{j\varepsilon}$. While this relationship is argued to hold in the separable model studied in Ref.~\cite{Zhu_2021}, it is not clear if it holds more generally. 

As in the previous subsection, the difficulty arises due to the $k$-dependent chiral operator $\mathbb{M}_j(\vex k)$ needed to define $\eta_{j}$, since the literature on Floquet topology has only considered constant chiral operators. Here, we solve this problem by defining invariants $\eta_{j\varepsilon}$ directly related to the gap $\varrho_{j\varepsilon}$.

We start by setting an abbreviated notation for clarity. We will write $\overline{W}_j(k)$ and its projected matrix $\overline{\mathbb{W}}_j(k)$ for the Wilson line $\overline{W}_j(\vex k_{0},\vex k_1)$ where the starting and end points have $k_{0j} = 0$ and $k_{1j}=k$, respectively. We will also write $M_{j}(k)$ and its projected matrix $\mathbb{M}_j(k)$ for the projected mirror operator along $\ell_j$ at $\vex k$ with $k_{j}=k$. To avoid confusion with the original mirror operator $M_j$ we will always show the argument in the following.

Next, we define a periodic Wilson line
\begin{equation}\label{eq:Wper}
    \overline{W}_{j\varepsilon}^\text{per}(k) := e^{-i(k/2\pi)A^W_{j\varepsilon}} \overline{W}_j(k),
\end{equation}
where the dependence on $\varepsilon$ comes from the branch cut of $\ln_\varepsilon$ at $e^{i\varepsilon}$ used to define $A^W_{j\varepsilon} := -i\ln_\varepsilon W_j$. It is straightforward to show $\overline{W}_j^\text{per}(k+2\pi) = \overline{W}_j^\text{per}(k)$. A branch cut at $e^{i\varepsilon}$ can be chosen for $\ln_\varepsilon W_j$ if and only if there is a gap $\varrho_{j\varepsilon}\neq0$ in the Wannier spectrum of $W_j$. 

In close parallel to the definitions used for the chiral Floquet Hamiltonians in Ref.~\cite{Yao_2017}, we have
\begin{equation}
    M_j(0) A^W_{j\varepsilon} M_j(0) = - A^W_{j,-\varepsilon} + 2\pi.
\end{equation}
From this and using using Eq.~\eqref{eq:MWlineM}, it follows
\begin{equation}
    M_j(0) \overline{W}^\text{per}_{j\varepsilon}(k) M_j(k) = e^{-ik} \overline{W}^\text{per}_{j,-\varepsilon}(-k).
\end{equation}
This is a new result for a $k$-dependent chiral operator.

Therefore, we have for $\varepsilon = 0$ and $\pi$,
\begin{equation}\label{eq:MWMe}
    M_j(0) \overline{W}^\text{per}_{j\varepsilon}(\pi) M_j(\pi) = - e^{i\varepsilon} \overline{W}^\text{per}_{j\varepsilon}(\pi),
\end{equation}
where we have used the periodicity of $\overline W^\text{per}$ and the fact that $\ln_{\varepsilon+2\pi} = \ln_{\varepsilon} + 2\pi i$.
Now, since $M_j(0)$ and $M_j(\pi)$ have the same $\pm1$ eigenvalues, they can be diagonalized to the same diagonal matrix $D = S_j^\dagger(k_{*j}) M_j(k_{*j}) S_j(k_{*j})$ with $D^2=1$. Replacing this in Eq.~\eqref{eq:MWMe}, we find
\begin{equation}\label{eq:DSWSe}
    D S_j^\dagger(0) \overline{W}^\text{per}_{j\varepsilon}(\pi) S_j(\pi) = - e^{i\varepsilon} S_j^\dagger(0) \overline{W}^\text{per}_{j\varepsilon}(\pi) S_j(\pi) D.
\end{equation}
Since $D$ is diagonal, this means the $S_j^\dagger(0) \overline{W}^\text{per}_{j\varepsilon}(\pi) S_j(\pi)$ is block diagonal for $\varepsilon=\pi$ and block off-diagonal for $\varepsilon=0$. This is also a new result for a $k$-dependent chiral operator.

Thus, we define the invariants $\eta^\text{per}_{j\varepsilon}$ as the winding number of $D_{j\varepsilon}$, the diagonal or off-diagonal block of $S_j^\dagger(0) \overline{W}^\text{per}_{j\varepsilon}(\pi) S_j(\pi)$, respectively, for $\varepsilon=\pi$ or $\varepsilon=0$:
\begin{equation}
    \eta^\text{per}_{j\varepsilon} = \frac1{2\pi i} \oint_{\ell_{\bar j}} \partial_k \ln\det[D_{j \varepsilon}(k)] dk,
\end{equation}
where we have restored the dependence on the momentum component along $\ell_{\bar j}$. In concrete calculations, we can work directly with the projected matrices of Wilson lines and mirror operators, so that $\overline{\mathbb{W}}^\text{per}_{j\varepsilon}(k) = e^{-i(k/2\pi)\mathbb{A}^W_{j\varepsilon}}\overline{\mathbb{W}}_j(k)$, $\mathbb{M}_j(k_{*j})$ are diagonalized by a unitary $\mathbb{S}_j(k_{*j})$, and $\mathbb{D}_{j\varepsilon}$ is the corresponding block of $\mathbb{S}_j^\dagger(0) \overline{\mathbb{W}}^\text{per}_{j\varepsilon}(\pi) \mathbb{S}_j(\pi)$. In order to calculate the diagonalizing unitaries $\mathbb{S}_j$, we also need to fix the gauge of the projected basis. We do so in the same way as before, i.e. by using the mirror-filtered basis.  In effect, $\mathbb{S}_j^\dagger(0) \overline{\mathbb{W}}^\text{per}_{j\varepsilon}(\pi) \mathbb{S}_j(\pi)$ is exactly the matrix representation of $\overline{W}^\text{per}_{j\varepsilon}(\pi)$ in the mirror-filtered basis.

Now, we are in a position to directly calculate and compare the mirror-filtered invariants of the Wannier Hamiltonian, $\eta_j(k_{*j})$, and the invariants defined for the Wannier spectrum, $\eta_{j\varepsilon}^\text{per}$. We show the values of the latter in Figs.~\ref{fig:BECsep}-\ref{fig:BEC-BC} and report that in all the cases we have studied, we have found that
\begin{equation}
\eta_{j\varepsilon}^\text{per} = \frac{\eta_{j}(0) + e^{i\varepsilon}\eta_j(\pi)}{2},
\end{equation}
holds true.

\begin{figure}
    \centering
    \includegraphics[width=\linewidth]{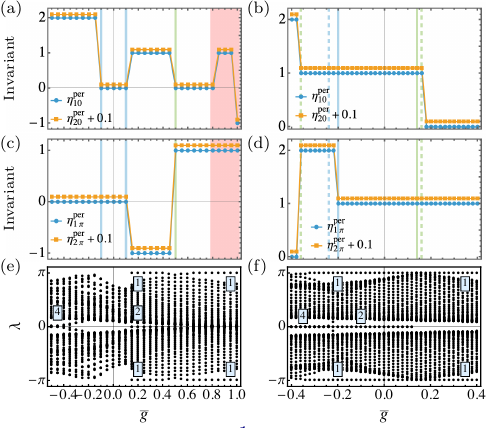}
    \caption{(a)-(d) Bulk-boundary correspondence in the for the $C_4$ symmetric model~\eqref{eq:bc} with variable $\bar g$ showing the bulk invariants, and (e), (f) the Wannier spectra with cylindrical boundaries with 30 unit cells in the open direction. The parameters correspond to those in Fig.~\ref{fig:BEW-BC}. In (a)-(d), some values are offset slightly to distinguish the coinciding values.  In (e) and (f), the boxed numbers show the number of midgap Wannier states.}
    \label{fig:BEC-BC}
\end{figure}

\begin{figure}
    \centering
    \includegraphics[width=\linewidth]{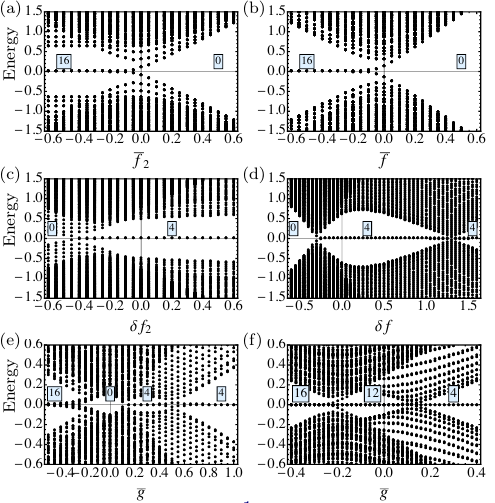}
    \caption{ Energy spectra with open boundaries and $30$ unit cells in each direction for (a),(b) the separable model in Eq.~\eqref{eq:ff} with parameters corresponding, respectively, to those in Fig.~\ref{fig:BEWsep} left and right panels; (c),(d) nonseparable model in Eq.~\eqref{eq:ff} with parameters corresponding, respectively, to those in Fig.~\ref{fig:BEWnonsep} left and right panels; and (e),(f) the $C_4$ symmetric model in Eq.~\eqref{eq:bc} with parameters corresponding, respectively, to those in Fig.~\ref{fig:BEW-BC} left and right panels. The boxed numbers show the
number of zero-energy corner-bound states.}
    \label{fig:OBC}
\end{figure}

We observe that the Wannier gap closings and nonanalytic changes in $\varrho_{j\varepsilon}$ indeed result in changes to $\eta^\text{per}_{j\varepsilon}$, thus fully characterizing the topology of Wannier spectra, and furnishing a bulk-boundary correspondence for Wilson loops. 
 We investigate the bulk-boundary correspondence by calculating the Wannier spectra, obtained as phases $\lambda_j$ of the eigenvalues $e^{i\lambda_j}$ of the Wilson loops $W_j$, under cylindrical boundary conditions. These are shown in Fig.~\ref{fig:BECsep}(e)-(h), Fig.~\ref{fig:BECnonsep}(e)-(h) and Fig.~\ref{fig:BEC-BC}(e),(f). In nearly all cases the mirror-filtered invariants $\eta_{j\varepsilon}^\text{per}$ match the number of midgap Wannier states in the corresponding gaps at $\varepsilon = 0,\pi$. The one exception is found in Fig.~\ref{fig:BEC-BC} right panels, where for $-0.4\lesssim \bar g \lesssim -0.2$, we find $\eta^\text{per}_{j0}=1$ and $\eta^\text{per}_{j\pi}=2$ but observe two midgap states at $\varepsilon=0$ and one midgap state at $\varepsilon=\pi$ per edge of the cylinder. We do not know the source of this discrepancy at this time. Additionally, for the $C_4$ symmetric models (the right panels in Figs.~\ref{fig:BECsep} and~\ref{fig:BECnonsep} and in Fig.~\ref{fig:BEC-BC}), we find $\eta_{1\varepsilon}^\text{per}=\eta_{2\varepsilon}^\text{per}$ and $\lambda_1 = \lambda_2 \equiv \lambda$, as expected. Interestingly, the Wilson loop invariants change not only at Wannier gap closings, but also at bulk energy gap closings that correspond to kinks in the Wannier gaps shown in Figs.~\ref{fig:BEWsep}-\ref{fig:BEW-BC}. These changes correspond to the boundary Wannier states under cylindrical boundary conditions, which can appear and disappear even without Wannier gaps closings.

However, the relation to the bulk-boundary correspondence of the original Hamiltonian is less clear.  In Fig.~\ref{fig:OBC}, we show the energy spectra under open boundary conditions for the separable and nonseparable models. We have constructed various combinations of invariants $\eta^\text{per}_{j\varepsilon}$, motivated by different intuitions, that match the number of corner-bound states of the original Hamiltonian in open boundary conditions in some but not all cases.  For example, it is easy to check that the product $\eta^\text{per}_{1\pi}\eta^\text{per}_{2\pi}$ does so by comparing Figs.~\ref{fig:BECsep} and \ref{fig:BECnonsep} to Fig.~\ref{fig:OBC}(a)-(d).
This can be motivated by the expectation that changes in the invariants characterizing the gap $\varrho_{j\pi}$ are related to the corresponding spectral edge gap closing, and then by the intuition that the edge theory maintains the tensor-product form of the bulk Hamiltonian.  However, this construction fails to capture the bulk-boundary correspondence in Fig.~\ref{fig:BEC-BC} compared to Fig.~\ref{fig:OBC}(e) and (f). We have also inspected more complicated constructions, such as $\min(\eta_{\max\times},\eta_{\max+})$ where $\eta_{\max\circ} = \max(\eta_{10},\eta_{1\pi}) \circ \max(\eta_{20},\eta_{2\pi})$, which captures the correct number of corner-bound states in all cases we studied except the right panels of Fig.~\ref{fig:BEC-BC}. The motivation behind this construction is more complicated, but since it does not work in all the cases, we only present it as an interesting attempt at such a construction.

\section{Summary and outlook}\label{sec:sum}

In this work, we explored the higher-order bulk–boundary correspondence of a family of chiral-symmetric Bloch Hamiltonians incorporating mirror and rotational symmetries. These models extend the $\pi$-flux square model, the prototypical topological quadrupole insulator, to both separable and nonseparable models with extended and mixed hopping. Previous work on separable models established a $\mathbb{Z}$-valued invariant governing the number of corner states. However, this invariant ceases to exist in nonseparable models, motivating the development of new momentum-space diagnostics.

To address this challenge, we introduced gauge-independent mirror-filtered winding numbers for Wannier Hamiltonians, defined via Wilson loops in a mirror-projected basis of the occupied states. We further constructed a complementary family of invariants associated with Wannier gaps at $\varepsilon = 0$ and $\pi$, using periodicized Wilson lines adapted from Floquet theory. These invariants provide a detailed characterization of Wilson-loop and Wannier-band topology. While they furnish a bulk-boundary correspondence for Wilson loops, their relation to the physical higher-order boundary spectrum is not universal, underscoring a subtle distinction between Wannier-sector topology and the boundary physics of the underlying Hamiltonian.

Several questions and new directions emerge from our results.  Our construction of mirror-filtered and periodicized Wilson-loop invariants can be extended to other symmetry classes, including those with antiunitary or nonsymmorphic crystalline symmetries. An open challenge is the formulation of momentum-space invariants for chiral higher-order phases in general nonseparable models. While real-space invariants such as Bott indices~\cite{Ono_2019,Trifunovic_2019b,Wheeler_2019,Agarwala_2020,Lin_2020,Kang_2019,Lin_2021,Benalcazar_2022} are useful, especially in the presence of disorder, it is still desirable to construct momentum-space invariants using the Bloch Hamiltonian. This not only provides an additional diagnostic for higher-order topological phases, it can also open new avenues for exploring extensions of such phases analytically.

\begin{acknowledgments}
We acknowledge fruitful discussions with T. L. Hughes and correspondence with W. Benalcazar and A. Cerjan in various stages of this work. This work is supported in part by Indiana University Office for Research Development (Bridge Program), IU Institute for Advanced Study, and by the National Science Foundation through Grant No. DMR-2533543.
\end{acknowledgments}

\appendix*

\section{Chiral winding numbers}

\subsection{Reformulations of the winding number}
Consider the Hamiltonian $H(k)$ with chiral operators $C$, $\{H,C\}=0$. In the chiral basis, where $C=\sigma_z\otimes 1$ we have $H(k) =\sigma_x\otimes d_1(k) + \sigma_y\otimes d_2(k)$ with Hermitian operators $d_1$ and $d_2$. The winding number of the Hamiltonian $H$ relative to the chiral operator $C$ is defined as
\begin{equation}
    w[H,C] = \frac1{2\pi i}\oint \partial_k\left[ \ln\det d(k)\right] dk,
\end{equation}
where $d(k) = d_1(k) - i d_2(k)$.

\begin{WNth}\label{th:qdq}
The winding number can be expressed as
\begin{equation}
w[H,C] = \frac1{2\pi i}\oint \mathrm{tr}\left[q^\dagger(k) \partial_k q(k)\right] dk,
\end{equation}
where $q$ is the unitary obtained by flattening $d$, i.e. $q(k) = u(k)v^\dagger(k)$ given the singular-value decomposition $d(k) = u(k)D(k)v^\dagger(k)$.
\end{WNth}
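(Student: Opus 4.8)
The plan is to reduce the determinantal winding of $d$ to the winding of the unitary $q$ by peeling off the strictly positive singular-value factor, and then to invoke Jacobi's formula. First I would take the singular-value decomposition $d(k)=u(k)D(k)v^\dagger(k)$ and compute determinants, giving $\det d = \det u\,\det D\,\det v^\dagger = (\det D)(\det q)$ with $q=uv^\dagger$. Because the chiral Hamiltonian is gapped, $d(k)$ is invertible along the loop, so every singular value is strictly positive and $\det D(k)$ is a real positive number; consequently $\log\det D$ is a single-valued real function of $k$.

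Next I would split $\log\det d = \log\det q + \log\det D$ and integrate. Since $\log\det D$ is single-valued and periodic around the closed loop, $\oint \partial_k\log\det D\,dk = 0$, so that
\begin{equation}
\frac1{2\pi i}\oint \partial_k\log\det d\,dk = \frac1{2\pi i}\oint \partial_k\log\det q\,dk.
\end{equation}
Applying Jacobi's formula $\partial_k\log\det q = \mathrm{tr}[q^{-1}\partial_k q]$ together with the unitarity $q^{-1}=q^\dagger$ then gives
\begin{equation}
\frac1{2\pi i}\oint \partial_k\log\det q\,dk = \frac1{2\pi i}\oint \mathrm{tr}[q^\dagger \partial_k q]\,dk,
\end{equation}
which is the claimed identity.

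The main obstacle I anticipate is ensuring that $q(k)$ is itself well defined and differentiable, since the SVD factors $u$ and $v$ are ambiguous (by diagonal phases, and by larger unitary rotations on degenerate singular-value subspaces), so that $\partial_k q$ need not even make sense as written. I would resolve this by observing that $q$ is exactly the unitary factor of the polar decomposition $d = q\,(d^\dagger d)^{1/2}$, i.e. $q = d\,(d^\dagger d)^{-1/2}$: indeed $d^\dagger d = vD^2v^\dagger$, so $(d^\dagger d)^{1/2}=vDv^\dagger$ and $d\,(d^\dagger d)^{-1/2} = uDv^\dagger\,vD^{-1}v^\dagger = uv^\dagger = q$. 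This expression is manifestly independent of the SVD gauge, and because $d$ is invertible with strictly positive $(d^\dagger d)^{1/2}$, it depends smoothly on $k$ wherever the gap persists. With $q$ fixed in this gauge-invariant way, $\partial_k q$ and the trace integrand are well defined, closing the argument.
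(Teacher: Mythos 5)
Your proposal is correct and follows essentially the same route as the paper: factor $\det d = \det q\,\det D$, drop the winding of the real positive periodic $\log\det D$, and apply Jacobi's (Laplace's) identity with the unitarity of $q$. Your additional observation that $q = d\,(d^\dagger d)^{-1/2}$ is the polar factor, which resolves the SVD gauge ambiguity and guarantees smoothness of $q(k)$ on the gapped loop, is a worthwhile refinement that the paper's proof leaves implicit.
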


\begin{proof}
We have $\det d(k) = \det q(k) \det D(k)$, so that $\partial_k\ln\det d(k) = \partial_k\ln\det q(k) + \partial_k \ln\det D(k)$. Since $\ln\det D(k)$ is real and periodic, it has no winding. Then, using Laplace’s identity for the derivative of the determinant, $\partial_k\det h = \text{tr}\left(h^{-1}\partial_k h\right)\det h$, and the unitarity of $q(k)$, we find the final result.
\end{proof}

\begin{WNth}\label{th:Cudu}
Another expression for the winding number is
\begin{equation}
    w[H,C] = \frac{1}{\pi } \sum_\alpha \oint\braket{Cu_\alpha(k)|i\partial_ku_\alpha(k)}dk,
\end{equation}
where $\ket{u_\alpha(k)}$ are eigenstates of $H(k)$ with positive or negative eigenvalues.
\end{WNth}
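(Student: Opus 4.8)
The plan is to build an explicit energy eigenbasis of $H(k)$ directly from the singular-value decomposition already introduced in Theorem~\ref{th:qdq}, evaluate the chiral-projected Berry connection on that basis, and show that the sum over a chiral sector collapses onto $\tr[q^\dagger\partial_k q]$, so that the statement follows at once from Theorem~\ref{th:qdq}.

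First I would work in the chiral basis $C=\sigma_z\otimes I$, in which $H=\left(\begin{smallmatrix} 0 & d\\ d^\dagger & 0\end{smallmatrix}\right)$, and take the decomposition $d=uDv^\dagger$ with $u,v$ unitary and $D\geq0$ diagonal. From $dv_\alpha=D_\alpha u_\alpha$ and $d^\dagger u_\alpha=D_\alpha v_\alpha$ one verifies that $\ket{u_\alpha^\pm}=\tfrac1{\sqrt2}\binom{u_\alpha}{\pm v_\alpha}$ are normalized eigenstates of $H$ with energies $\pm D_\alpha$, and crucially that $C\ket{u_\alpha^\pm}=\ket{u_\alpha^\mp}$. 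This last identity is what makes the chiral-projected connection tractable, since it gives $\bra{Cu_\alpha^\pm}=\bra{u_\alpha^\mp}$, whose upper and lower blocks carry a relative sign flip.

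The central computation is then $\inner{Cu_\alpha^\pm}{i\partial_k u_\alpha^\pm}=\tfrac{i}{2}\bigl(u_\alpha^\dagger\partial_k u_\alpha - v_\alpha^\dagger\partial_k v_\alpha\bigr)$, where the relative minus sign is exactly the block sign flip produced by $C$. Notably the result is independent of the $\pm$ label, which is the origin of the claim that one may use either the positive- or the negative-energy eigenstates. Summing over $\alpha$ converts the two terms into traces, $\sum_\alpha\inner{Cu_\alpha^\pm}{i\partial_k u_\alpha^\pm}=\tfrac{i}{2}\bigl[\tr(u^\dagger\partial_k u)-\tr(v^\dagger\partial_k v)\bigr]$. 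To close the argument I would then establish the algebraic identity $\tr[q^\dagger\partial_k q]=\tr[u^\dagger\partial_k u]-\tr[v^\dagger\partial_k v]$ for $q=uv^\dagger$, which follows from cyclicity of the trace together with $u^\dagger u=v^\dagger v=\id$ (the latter supplying $\tr[(\partial_k v^\dagger)v]=-\tr[v^\dagger\partial_k v]$). Integrating over $\ell$ and invoking Theorem~\ref{th:qdq}, which identifies $\tfrac1{2\pi i}\oint\tr[q^\dagger\partial_k q]\,dk=w[H,C]$, then produces the stated formula once the overall prefactor and the factors of $i$ are tracked consistently.

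The step that requires genuine care, rather than routine algebra, is the bookkeeping of conventions together with the treatment of degeneracies. When singular values coincide, $u$ and $v$ are defined only up to a common block unitary, so the individual eigenstates $\ket{u_\alpha}$ are gauge dependent; I must check that only the gauge-invariant combination $q=uv^\dagger$, and hence the trace, survives, so that the final expression is well defined. Equally delicate is propagating the overall sign and the $i$'s through the decomposition of $d$, the action of $C$, and the normalization of $w[H,C]$ fixed in Theorem~\ref{th:qdq}, since the entire content of the identity is the precise numerical prefactor $1/\pi$. I expect this sign-and-prefactor tracking to be the main obstacle.
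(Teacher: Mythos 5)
Your proposal is correct and takes essentially the same route as the paper: both work in the chiral basis, write the energy eigenstates as $\tfrac1{\sqrt2}(a_\alpha,\pm b_\alpha)^{T}$ so that $C$ flips the sign of the lower block, reduce $\sum_\alpha\braket{Cu_\alpha|i\partial_k u_\alpha}$ to $\tfrac{i}{2}\,\mathrm{tr}[q^\dagger\partial_k q]$ with $q=uv^\dagger$, and then invoke Theorem~\ref{th:qdq}; the only (cosmetic) difference is that you construct the eigenbasis explicitly from the SVD, whereas the paper starts from a generic smooth eigenbasis and reads off $q=\sum_\alpha\ket{a_\alpha}\bra{b_\alpha}$ from the flattened Hamiltonian, which also disposes of the degeneracy/gauge worry you raise, since the trace is unchanged under any common $k$-dependent rotation of the $a$'s and $b$'s. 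The sign bookkeeping you flag as the main obstacle is genuinely delicate — carried through literally, your chain (like the corresponding step in the paper's own proof) yields $-\pi w$ rather than $\pi w$ for the convention $d=d_1-id_2$ — but this overall-sign mismatch is inherited from the paper's conventions and is not a gap in your argument.
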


\begin{proof}
In the chiral basis, $C=\sigma_z\otimes I$, we may write
\begin{equation}
    \ket{u_\alpha} = \frac1{\sqrt 2} \begin{bmatrix} \ket{a_\alpha} \\ \ket{b_\alpha} \end{bmatrix}, \quad C\ket{u_\alpha} = \frac1{\sqrt 2} \begin{bmatrix} \ket{a_\alpha} \\ -\ket{b_\alpha} \end{bmatrix}.
\end{equation}
The orthonormality of the basis set $\{ \ket{u_\alpha}, C\ket{u_\alpha} \}$ yields
\begin{align}
    \sum_\alpha\ket{a_\alpha}\bra{a_\alpha} &= \sum_\alpha\ket{b_\alpha}\bra{b_\alpha} = 1, \\
    \braket{a_\alpha|a_\beta} &=\braket{b_\alpha|b_\beta} = \delta_{\alpha\beta}.
\end{align}
That is, the states $\{\ket{a_\alpha}\}$ and $\{\ket{b_\alpha}\}$ furnish orthonormal bases for the projected chiral eigenspaces with $\pm1$ eigenvalues, respectively. We write the flattened Hamiltonian as
\begin{align}
    \underline H 
    &= \sum_\alpha \big(\ket{u_\alpha}\bra{u_\alpha}-C\ket{u_\alpha}\bra{u_\alpha}C \big) \\
    &= \begin{bmatrix} 0 & \sum_\alpha \ket{a_\alpha}\bra{b_\alpha} \\ \sum_\alpha \ket{b_\alpha}\bra{a_\alpha} & 0 \end{bmatrix}.    
\end{align}
So, the off-diagonal unitary block $q=\sum_\alpha \ket{a_\alpha}\bra{b_\alpha}$ maps $\ket{b_\alpha} \mapsto \ket{a_\alpha}$.
Therefore,
\begin{align*}
w[H,C] &= \frac1{2\pi i} \oint \text{tr}[q^\dagger(k) \partial_k q(k)]dk \\
&= \frac1{2\pi} \sum_\alpha \oint [\braket{a_\alpha|i\partial_k a_\alpha} - \braket{b_\alpha|i\partial_k b_\alpha}] dk \\
&= \frac1{\pi} \sum_\alpha \oint \bra{u_\alpha}i\partial_k \ket{Cu_\alpha} dk \\
&= \frac1{\pi} \sum_\alpha \oint \braket{C u_\alpha|i\partial_k u_\alpha} dk.
\end{align*}
\end{proof}

The states $\ket{u_\alpha}$ and $C\ket{u_\alpha}$ have opposite eigenvalues of $H$. Note that once a gauge has been fixed for the eigenspace spanned by $\{\ket{u_\alpha}\}$, the gauge for the eigenspace with the opposite eigenvalue is fixed by $C$. Also, the winding number is independent of the choice of a smooth gauge for eigenspace spanned by $\{\ket{u_\alpha}\}$. Fixing the gauge also fixes the gauge for the chiral eigenspaces spanned by $\{\ket{a_\alpha}\}$ and $\{\ket{b_\alpha}\}$. This means gauge transformations $\ket{a_\alpha(k)} \mapsto g_a\ket{a_\alpha(k)}$ and $\ket{b_\alpha(k)} \mapsto g_b \ket{b_\alpha(k)}$ can differ only by a $k$-independent unitary $g$: $g_a(k) = g\,g_b(k)$. Therefore, $\oint\text{tr}[g_a^\dagger\partial_k g_a]dk = \oint\text{tr}[g_b^\dagger\partial_k g_b]dk$ and their contributions to the winding number cancel out.

\begin{WNth}\label{th:gpm}
The winding number can be expressed as
\begin{align}\label{eq:wgpm}
w[H,C] = \frac{\gamma^+-\gamma^-}{2\pi},
\end{align}
where  $\gamma^\pm = \sum_\alpha \oint \braket{a^\pm_\alpha(k)|i\partial_k a^\pm_\alpha(k)} dk$, $\ket{a^-_\alpha(k)} := \underline H(k) \ket{a^+_\alpha(k)}$, and $\{\ket{a^+_\alpha(k)}\}$ is a continuously defined frame for the positive chiral subspace and $\underline H(k)$ is the flattened Hamiltonian.
\end{WNth}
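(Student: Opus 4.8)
The plan is to deduce the statement directly from the intermediate identity already obtained inside the proof of Theorem~\ref{th:Cudu}, and then to promote it from the particular frame used there to an arbitrary continuous frame by a gauge-consistency argument. First I would recall that the proof of Theorem~\ref{th:Cudu} passes through the expression
\[
w[H,C] = \frac{1}{2\pi}\sum_\alpha \oint \left[\braket{a_\alpha|i\partial_k a_\alpha} - \braket{b_\alpha|i\partial_k b_\alpha}\right] dk,
\]
where, in the chiral basis $C=\sigma_z\otimes I$, the frames $\{\ket{a_\alpha}\}$ and $\{\ket{b_\alpha}\}$ span the $\pm1$ chiral eigenspaces and are induced from the energy eigenstates $\ket{u_\alpha}$. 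To match this to $(\gamma^+-\gamma^-)/2\pi$, I would identify $\ket{a^+_\alpha}=\ket{a_\alpha}$ as a frame for the positive chiral subspace and verify that the prescribed gauge $\ket{a^-_\alpha}:=\underline H\ket{a^+_\alpha}$ reproduces $\ket{b_\alpha}$. This is immediate from the off-diagonal block structure of $\underline H$, whose off-diagonal block is $q=\sum_\alpha\ket{a_\alpha}\bra{b_\alpha}$: acting on the positive-chiral vector $\ket{a^+_\alpha}$ gives lower block $q^\dagger\ket{a_\alpha}=\ket{b_\alpha}$. With this identification $\gamma^+$ and $\gamma^-$ are exactly the two Berry phases appearing above, so the formula holds for this canonical frame.

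The remaining work is to show the identity is insensitive to the choice of positive frame, which is the real content of the prescription $\ket{a^-_\alpha}:=\underline H\ket{a^+_\alpha}$. Any other continuous frame for the positive chiral subspace is $\ket{a^+_\alpha}=\sum_\beta g_{\beta\alpha}(k)\ket{a_\beta}$ for some smooth $k$-dependent unitary $g(k)$. Because $\underline H(k)$ is gauge-fixed, the induced negative frame inherits the \emph{same} transformation, $\ket{a^-_\alpha}=\underline H\ket{a^+_\alpha}=\sum_\beta g_{\beta\alpha}\ket{b_\beta}$. I would then compute that under such a transformation each Berry phase shifts by the same amount, $\gamma^\pm\mapsto\gamma^\pm+\oint i\,\tr[g^\dagger\partial_k g]\,dk$, using unitarity of $g$ to collapse the cross terms. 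The common winding contribution cancels in the difference, so $\gamma^+-\gamma^-$, and hence the claimed equality with $2\pi\,w[H,C]$, is independent of the choice of positive frame.

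The step I expect to require the most care is precisely this gauge cancellation: it is essential that the negative frame is not chosen freely but tied to the positive one through $\underline H$, since an independent $k$-dependent gauge on $\{\ket{a^-_\alpha}\}$ would leave an uncancelled winding of its own and spoil the result. I would therefore stress that a single unitary $g(k)$ governs both frames simultaneously, which is what makes $\gamma^+-\gamma^-$ a genuine topological invariant rather than a gauge-dependent Berry phase. No new estimates are needed beyond the algebra already assembled in Theorems~\ref{th:qdq} and~\ref{th:Cudu}; the novelty is entirely in fixing the relative gauge of the two chiral subspaces.
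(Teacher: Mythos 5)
Your proposal is correct and takes essentially the same approach as the paper: it identifies $\ket{a^+_\alpha}$, $\ket{a^-_\alpha}=\underline H\ket{a^+_\alpha}$ with the frames $\ket{a_\alpha}$, $\ket{b_\alpha}$ from the proof of Theorem~\ref{th:Cudu} and invokes the intermediate Berry-phase identity there, then notes gauge invariance because the negative frame's gauge is slaved to the positive one and cancels in the difference. Your explicit computation of the common shift $\oint i\,\tr[g^\dagger\partial_k g]\,dk$ under a frame change merely spells out what the paper asserts in one sentence.
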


\begin{proof}
Using the notation introduced in the proof of Theorem~\ref{th:Cudu}, in the chiral basis
\begin{equation}
\ket{a^+_\alpha} = \begin{bmatrix} \ket{a_\alpha} \\ 0\end{bmatrix}, \quad \ket{a^-_\alpha} := \underline H \ket{a^+}= \begin{bmatrix} 0 \\ \ket{b_\alpha} \end{bmatrix}.    
\end{equation}
As a result, we may also choose the frame as $\ket{a^+_\alpha} = \frac1{\sqrt 2}(\ket{u_\alpha} + C\ket{u_\alpha})$ independent of basis. The result then follows by expanding the right-hand side of Eq.~\eqref{eq:wgpm} and using Theorem~\ref{th:Cudu}. Note that the result is gauge invariant since the gauge of $\ket{a^-_\alpha}$ is fixed by that of $\ket{a_\alpha^+}$ and cancels after subtraction.
\end{proof}

\begin{WNcr}\label{cr:PHdH}
    The winding number can be expressed as
    \begin{equation}\label{eq:PHdH}
        w[H,C] = \frac1{2\pi i}\oint \mathrm{tr}\left[P_+ \underline H(k) \partial_k \underline H(k) \right]dk,
    \end{equation}
    where $P_+$ is the projector over the positive chiral subspace and the trace is taken over the entire Hilbert space.
\end{WNcr}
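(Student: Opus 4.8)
The plan is to recognize Eq.~\eqref{eq:PHdH} as a repackaging of the block formula $w[H,C]=\frac1{2\pi i}\oint\tr[q^\dagger\partial_k q]\,dk$ already produced in the proofs of Theorems~\ref{th:qdq} and~\ref{th:Cudu}, so the statement should reduce to a short algebraic manipulation rather than any new input. I would work in the chiral basis $C=\sigma_z\otimes\id$ used there, where the flattened Hamiltonian has the off-diagonal block form $\underline H=\bigl[\begin{smallmatrix}0 & q\\ q^\dagger & 0\end{smallmatrix}\bigr]$ with $q$ the unitary block identified in the proof of Theorem~\ref{th:Cudu}, and where $P_+=\tfrac12(\id+C)$ projects onto the upper (positive-chiral) block.

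First I would substitute these forms directly. Differentiating gives $\partial_k\underline H$ with the same off-diagonal structure, so $\underline H\,\partial_k\underline H$ is block diagonal with entries $q\,\partial_k q^\dagger$ and $q^\dagger\,\partial_k q$; applying $P_+$ keeps only the upper block, giving $\tr[P_+\underline H\,\partial_k\underline H]=\tr[q\,\partial_k q^\dagger]$. A cleaner, manifestly basis-independent route is to use $\underline H^2=\id$: then $\tr[\underline H\,\partial_k\underline H]=\tfrac12\tr[\partial_k\underline H^2]=0$ kills the identity piece in $P_+=\tfrac12(\id+C)$, leaving $\tr[P_+\underline H\,\partial_k\underline H]=\tfrac12\tr[C\,\underline H\,\partial_k\underline H]$, the standard $C$-weighted winding integrand. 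Either way, the remaining task is to identify the surviving block trace with $\tr[q^\dagger\partial_k q]$ and invoke Theorem~\ref{th:qdq}.

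The one point requiring care---the main obstacle, such as it is---is the bookkeeping relating $q\,\partial_k q^\dagger$ to $q^\dagger\,\partial_k q$: from unitarity $qq^\dagger=\id$ one has $\partial_k(qq^\dagger)=0$, which together with cyclicity of the trace fixes $\tr[q\,\partial_k q^\dagger]=-\tr[q^\dagger\partial_k q]$. I would therefore track the chiral-basis conventions consistently---which block is selected by $P_+$, and the orientation of $q$ inherited from the singular-value flattening of Theorem~\ref{th:qdq}---to land on the stated normalization. Gauge independence needs no separate argument: $P_+$, $\underline H$, and hence the integrand are built only from the spectral projectors, so the result is automatically independent of any frame chosen for the occupied bands.
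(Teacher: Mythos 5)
Your route is genuinely different from the paper's. The paper derives the corollary from Theorem~\ref{th:gpm}: starting from $\ket{a^-_\alpha}=\underline H\ket{a^+_\alpha}$ it computes $\braket{a^-_\alpha|\partial_k a^-_\alpha}=\braket{a^+_\alpha|\partial_k a^+_\alpha}+\bra{a^+_\alpha}\underline H\,\partial_k\underline H\ket{a^+_\alpha}$, so that the difference of Berry phases collapses to $\sum_\alpha\bra{a^+_\alpha}\underline H\,\partial_k\underline H\ket{a^+_\alpha}=\mathrm{tr}[P_+\underline H\,\partial_k\underline H]$ in one line. You instead work in the chiral basis and reduce to the $q$-block formula of Theorem~\ref{th:qdq}. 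Both are legitimate, and your observation that $\mathrm{tr}[\underline H\,\partial_k\underline H]=0$ lets you rewrite the integrand as $\tfrac12\mathrm{tr}[C\,\underline H\,\partial_k\underline H]$ is a nice consistency check against the main-text formula. The paper's route has the advantage of surviving a $k$-dependent chiral operator, which is the whole point of Theorem~\ref{th:gpm} in this paper; your block decomposition presupposes a fixed chiral basis.

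The one substantive problem is the sign, which you explicitly defer (``I would therefore track the chiral-basis conventions\ldots to land on the stated normalization'') rather than resolve. Carrying it out: with $\underline H=\bigl[\begin{smallmatrix}0&q\\ q^\dagger&0\end{smallmatrix}\bigr]$ and $P_+$ the upper block, one gets $\mathrm{tr}[P_+\underline H\,\partial_k\underline H]=\mathrm{tr}[q\,\partial_k q^\dagger]=-\mathrm{tr}[q^\dagger\partial_k q]$, so your argument as written proves $w[H,C]=-\frac1{2\pi i}\oint\mathrm{tr}[P_+\underline H\,\partial_k\underline H]\,dk$, the negative of the statement, if Theorem~\ref{th:qdq} is taken at face value. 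This is not merely your bookkeeping slip: writing $q=\sum_\alpha\ket{a_\alpha}\bra{b_\alpha}$ gives $\oint\mathrm{tr}[q^\dagger\partial_k q]\,dk=\oint\sum_\alpha[\braket{a_\alpha|\partial_k a_\alpha}-\braket{b_\alpha|\partial_k b_\alpha}]\,dk$, so dividing by $2\pi i$ produces the \emph{negative} of $\frac1{2\pi}\sum_\alpha\oint[\braket{a_\alpha|i\partial_k a_\alpha}-\braket{b_\alpha|i\partial_k b_\alpha}]\,dk$; the relative sign between Theorem~\ref{th:qdq} and Theorems~\ref{th:Cudu}--\ref{th:gpm} (and hence this corollary) flips silently between the first and second lines of the displayed chain in the proof of Theorem~\ref{th:Cudu}. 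So your approach is structurally sound, but to complete it you must actually fix the orientation convention ($q$ vs.\ $q^\dagger$, equivalently $d$ vs.\ $d^\dagger$) once and for all, and doing so honestly exposes a sign mismatch that needs to be reconciled with Theorem~\ref{th:qdq} rather than assumed away.
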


\begin{proof}
We have
$$\braket{a^-_\alpha|\partial_k a^-_\alpha} = \braket{a^+_\alpha|\partial_k a^+_\alpha} +\bra{a^+_\alpha}\underline H\partial_k \underline H \ket{a^+_\alpha},$$ 
where we have used $\ket{a^-} = \underline H \ket{a^+}$.
The result follows after writing $\bra{a^+_\alpha}\underline H\partial_k \underline H \ket{a^+_\alpha} = \mathrm{tr}\left[  \ket{a^+}\bra{a^+} \underline H \partial_k \underline H\right] = \mathrm{tr}\left[P_+ \underline H \partial_k \underline H \right]$.
\end{proof}

A word of caution about the calculation of the winding number in Theorem~\ref{th:gpm} and its Corollary~\ref{cr:PHdH}: it should not be calculated simply as the winding of the off-diagonal element of $\underline H$ in the $\{\ket{a^\pm}\}$ basis. Instead, it should be calculated as the difference of Berry phases in Eq.~\eqref{eq:wgpm}, or the trace formula in Eq.~\eqref{eq:PHdH}. For example, taking $\underline H(k) = \sigma_x \cos k + \sigma_y \sin k$ with $C = \sigma_z$ and $\ket{a^+} = (1, 0)$ we have $\ket{a^-} = \underline H \ket{a^+} = (0, e^{ik} )$. In this basis $\underline H = \sigma_x$, so the off-diagonal element has no winding. In fact, this is quite general: $\underline H$ is off-diagonal with the elements $\bra{a_\alpha^+}\underline H \ket{a_\beta^-} = \bra{a_\alpha^+} \underline H^2 \ket{a_\beta^+} = \delta_{\alpha\beta}$.

The expressions in Theorem~\ref{th:gpm} and its Corollary~\ref{cr:PHdH} allow us to define winding numbers with respect to a $k$-dependent chiral operator $C(k)$, since we can still define a continuous frame $\ket{a_\alpha^+(k)}$ in the positive eigenspace of $C(k)$ and its corresponding frame $\ket{a^-_\alpha(k)} := \underline H(k)\ket{a^+_\alpha(k)}$. The winding number $w[H,C]$ is thus precisely the winding number of the unitary map $q(k): \ket{a^-} \mapsto \underline H(k)\ket{a^-}$ onto the positive chiral subspace, independent of the choice of the original frame. The only requirement on the frame is that the frame must be smooth over $k$.

However, while the winding number defined in this way is independent of the choice of the frame, it is not an integer. This is so because the chiral eigenspaces themselves may experience a winding, which manifests as non-quantized Berry phases $\gamma^\pm$. We show these properties explicitly in two examples.

First, consider $\underline H = \sigma_z$. This is, of course, constant and so it should not have any winding number. Any unitary Hermitian $c_x \sigma_x + c_y \sigma_y$ with $c_x+ic_y$ on the unit circle anticommutes with $\underline H$. What if $c_x$ and $c_y$ themselves depend on $k$? For example, we may choose $C(k) =\cos k\, \sigma_x + \sin k\, \sigma_y$. There appears to be a gauge freedom in choosing the phases of the chiral eigenstates $\ket{a^+(k)}$ and $\ket{a^-(k)}$. If we choose these phases arbitrarily, the off-diagonal element of $\underline H$ will also have an arbitrary phase in the chiral basis and the winding number appears to be gauge-dependent. However, only one of the phases is arbitrary, say that of $\ket{a^+(k)}$ and the other phase is fixed by $\ket{a^-(k)} = \underline H\ket{a^+(k)}$. Then, since $\underline H$ itself is constant the Berry phases $\gamma^\pm$ are guaranteed to be equal and the winding number $w[H,C] = 0 $ for all choices of $C$.

Second, take $\underline H(k) = \cos\theta(\cos k\, \sigma_x + \sin k\, \sigma_y) - \sin\theta\, \sigma_z$ and $C(k) = \cos\theta\,\sigma_z+\sin\theta(\cos k\, \sigma_x + \sin k\, \sigma_y)$. Then
$$
\ket{a^+(k)} = \begin{bmatrix} \cos(\theta/2) \\ \sin(\theta/2)e^{ik} \end{bmatrix} \Rightarrow \ket{a^-(k)} = \begin{bmatrix} -\sin(\theta/2) \\ \cos(\theta/2)e^{ik} \end{bmatrix}.
$$
Thus,
$$
\gamma^+ = \oint\braket{a^+|i\partial_k a^+}dk = -2\pi\sin^2(\theta/2),
$$
and
$$
\gamma^- = \oint\braket{a^-|i\partial_k a^-}dk = -2\pi\cos^2(\theta/2).
$$
Therefore, $(\gamma^+-\gamma^-)/2\pi = \cos\theta.$

While the winding number defined by Theorem~\ref{th:gpm} and its Corollary~\ref{cr:PHdH} with respect to a $k$-dependent chiral operator is independent of the choice of frame of a given chiral eigenspace, it does depend on the choice of the chiral basis along $k$, even if we insist the basis is smooth.

\begin{WNcr}[Change of basis]
Define a change of basis with a unitary $U(k)$ as
\begin{align*}
H'(k) &= U^\dagger(k) H(k) U(k) , \\
C'(k) &= U^\dagger(k) C(k) U(k) .
\end{align*}
Then, the expression in \eqref{eq:PHdH} satisfies
$$
w[H',C'] = w[H,C] - w[U^\dagger,C], 
$$
where $w[U^\dagger,C] = \frac1{2\pi i}\oint \mathrm{tr}[C U\partial_kU^\dagger] dk = -w[U,C']$ and $C := 2P_C - 1= P_C - \underline{H} P_C \underline H$.    
\end{WNcr}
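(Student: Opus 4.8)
The plan is to reduce everything to the trace formula of Corollary~\ref{cr:PHdH}, namely $w[H,C] = \frac1{2\pi i}\oint \tr[P_C\,\underline H\,\partial_k\underline H]\,dk$ with $P_C=\frac12(1+C)=P_+$, and to track how each factor transforms under pointwise conjugation by $U(k)$. First I would note that flattening commutes with unitary conjugation, so that $\underline{H'}=U^\dagger\underline H U$, and that $P_{C'}=\frac12(1+U^\dagger C U)=U^\dagger P_C U$. Substituting these into the primed integrand $\tr[P_{C'}\,\underline{H'}\,\partial_k\underline{H'}]$ and expanding $\partial_k\underline{H'}=(\partial_k U^\dagger)\underline H U + U^\dagger(\partial_k\underline H)U + U^\dagger\underline H(\partial_k U)$ by the product rule, then repeatedly using $UU^\dagger=1$ and $\underline H^2=1$, produces exactly three terms under the trace.

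The middle term collapses, by cyclicity, to $\tr[P_C\,\underline H\,\partial_k\underline H]$, reproducing the integrand of $w[H,C]$. The remaining two terms carry the $U$-dependence. Cyclicity brings the first to $\tr[\underline H P_C\underline H\,U\partial_k U^\dagger]$ and the third to $\tr[P_C(\partial_k U)U^\dagger]$. The key algebraic input is the identity $\underline H P_C\underline H = P_-$ with $P_-=\frac12(1-C)$, which is precisely the relation $C=P_C-\underline H P_C\underline H$ flagged in the statement; it follows from the pointwise chiral anticommutation $\{\underline H(k),C(k)\}=0$ (valid because $C(k)$ is a chiral operator of $H(k)$ at each $k$, a property preserved by flattening) through $\underline H C\underline H=-C$. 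Combining this with $(\partial_k U)U^\dagger=-U\partial_k U^\dagger$, which comes from differentiating $UU^\dagger=1$, the two $U$-dependent terms sum to $\tr[(P_--P_C)U\partial_k U^\dagger]=-\tr[C\,U\partial_k U^\dagger]$, using $P_C-P_-=C$. Integrating and dividing by $2\pi i$ gives $w[H',C']=w[H,C]-\frac1{2\pi i}\oint\tr[C\,U\partial_k U^\dagger]\,dk$, which is the claim once $w[U^\dagger,C]:=\frac1{2\pi i}\oint\tr[C\,U\partial_k U^\dagger]\,dk$ is identified.

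The auxiliary identity $w[U^\dagger,C]=-w[U,C']$ is then a one-line check: substituting $C'=U^\dagger C U$ into $w[U,C']=\frac1{2\pi i}\oint\tr[C'\,U^\dagger\partial_k U]\,dk$ and simplifying with $UU^\dagger=1$ gives $\tr[C\,(\partial_k U)U^\dagger]$, and reusing $(\partial_k U)U^\dagger=-U\partial_k U^\dagger$ flips the sign.

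I expect the main obstacle to be bookkeeping rather than conceptual: keeping track of where each $U$, $U^\dagger$, and $\underline H$ lands after the product rule and applying cyclicity correctly to separate the three terms. The single substantive ingredient is the identity $\underline H P_C\underline H=P_-$; establishing it cleanly, and confirming that it holds pointwise for the $k$-dependent chiral operator, is the linchpin that lets the two $U$-dependent contributions collapse into the single winding $w[U^\dagger,C]$.
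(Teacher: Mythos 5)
Your proposal is correct and follows essentially the same route as the paper's own proof: expanding $\partial_k\underline{H}'$ by the product rule, using cyclicity together with $UP_{C'}U^\dagger=P_C$ and $\underline H P_C\underline H=1-P_C$ to reduce the three terms, and combining the two $U$-dependent pieces via $(\partial_k U)U^\dagger=-U\partial_k U^\dagger$ and $2P_C-1=C$ into $-\tr[C\,U\partial_k U^\dagger]$. The auxiliary identity $w[U^\dagger,C]=-w[U,C']$ is also verified exactly as in the paper.
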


\begin{widetext}
\begin{proof}
We have    
\begin{align*}
\oint \mathrm{tr} \left[ P_{C'} \underline{H}'\partial_{k}\underline{H}' \right] dk
&= \oint \mathrm{tr} \left[ P_{C'} \left( U^{\dagger} \underline{H} U \right)\partial_{k} \left(U^{\dagger} \underline{H} U \right) \right] dk \\
&= \oint \mathrm{tr} \left[ P_{C'} U^{\dagger}\underline{H}U \left\{  ( \partial_{k} U^{\dagger} )\underline{H}U
+ U^{\dagger} ( \partial_{k}\underline{H} ) U
+ U^{\dagger}\underline{H} ( \partial_{k} U ) \right\}
\right] dk
\\
&= \oint \mathrm{tr} \left[
P_{C'}  U^{\dagger}\underline{H} U  ( \partial_{k} U^{\dagger} ) \underline{H}U
+ P_{C'} U^{\dagger}\underline{H}(\partial_{k}\underline{H})U
+ P_{C'} U^{\dagger}\partial_{k}U
\right] dk
\\
&= \oint \mathrm{tr} \left[
(1-P_{C}) U  \partial_{k} U^{\dagger}
+ P_{C} \underline{H} \partial_{k}\underline{H}
+ P_{C'} U^{\dagger}\partial_{k}U
\right] dk
\\
&= \oint \mathrm{tr} \left[
(1-P_{C}) U  \partial_{k} U^{\dagger}
+ P_{C} \underline{H} \partial_{k}\underline{H}
- P_{C} U \partial_{k} U^{\dagger}
\right] dk
\\
&= \oint \mathrm{tr} \left[ P_{C} \underline{H} \partial_{k}\underline{H}  \right]- \oint \mathrm{tr} \left[
C U  \partial_{k} U^{\dagger} \right],
\end{align*}
where we have used $U P_{C'} U^\dagger = P_C$, $\underline{H} P_C \underline{H} = 1-P_C$, and $C = 2P_C-1$. The result follows after dividing by $2\pi i$.
To show $w[U^\dagger,C] = -w[U,C']$, we write $\mathrm{tr}[CU\partial_k U^\dagger] = -\mathrm{tr}[C (\partial_k U) U^\dagger] = -\mathrm{tr}[U^\dagger C U U^\dagger \partial_k U] = -\mathrm{tr}[C' U^\dagger \partial_k U]$.
\end{proof}
\end{widetext}

\subsection{Symmetry properties of the winding number}

We now establish a few results for the winding number under symmetry operations and when it vanishes.

\begin{WNth}\label{th:uniS}
The winding number vanishes for a Hamiltonian with a unitary symmetry, $S H(\vex k) S^\dagger = H(\vex k)$, that anticommutes with the chiral operator.
\end{WNth}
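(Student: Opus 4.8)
The plan is to use the trace reformulation of the winding number from Corollary~\ref{cr:PHdH}, namely $w[H,C] = \frac{1}{2\pi i}\oint \tr[P_+ \underline H\,\partial_k \underline H]\,dk$, and to show that the integrand vanishes identically in $k$. The strategy is to let the symmetry $S$ act by conjugation on the integrand and observe that it maps it to the analogous expression with $P_+$ replaced by $P_-$, so that the symmetric and antisymmetric combinations decouple.

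First I would record how $S$ acts on the two ingredients $P_+$ and $\underline H$. Since $S$ is unitary and $SH(\vex k)S^\dagger = H(\vex k)$, it commutes with every spectral projector of $H$, hence with the flattened Hamiltonian, $S\underline H S^\dagger = \underline H$. On the other hand $\{S,C\}=0$ gives $S P_+ S^\dagger = S\tfrac12(1+C)S^\dagger = \tfrac12(1-C) = P_- = 1-P_+$, so $S$ interchanges the two chiral projectors.

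Next I would use cyclicity of the trace together with the fact that $S$ is momentum-independent. Since $\tr[SXS^\dagger]=\tr[X]$, inserting $S^\dagger S = 1$ between factors gives $\tr[P_+\underline H\,\partial_k\underline H] = \tr[(SP_+S^\dagger)(S\underline H S^\dagger)(S\,\partial_k\underline H\,S^\dagger)] = \tr[P_-\underline H\,\partial_k\underline H]$, where I used $S\,\partial_k\underline H\,S^\dagger = \partial_k(S\underline H S^\dagger)=\partial_k\underline H$. Adding the $P_+$ and $P_-$ versions yields $2\,\tr[P_+\underline H\,\partial_k\underline H] = \tr[\underline H\,\partial_k\underline H]$. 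Finally, differentiating $\underline H^2 = 1$ gives $\{\underline H,\partial_k\underline H\}=0$, and cyclicity then forces $\tr[\underline H\,\partial_k\underline H] = -\tr[\underline H\,\partial_k\underline H]=0$. Hence the integrand is pointwise zero and $w[H,C]=0$.

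I do not anticipate a serious obstacle; the computation is short and self-contained once the corollary is in hand. The one point requiring care is that the argument relies on $S$ being constant in $\vex k$, so that $\partial_k$ commutes with the conjugation cleanly — precisely the delicacy the main text flags for momentum-dependent chiral operators. An alternative route would be the chiral-basis determinant definition, writing $S$ as a block off-diagonal unitary (forced by $\{S,C\}=0$) and checking that it conjugates $d(k)$ into a block whose determinant winds with the opposite sign; but tracking the off-diagonal blocks through $[S,H]=0$ is messier, so I would prefer the trace formula.
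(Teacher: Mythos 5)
Your proposal is correct and follows essentially the same route as the paper: both use the trace formula of Corollary~\ref{cr:PHdH}, conjugate by $S$ to send $P_+$ to $1-P_+$ while leaving $\underline H\,\partial_k\underline H$ fixed, and then dispose of the leftover $\tr[\underline H\,\partial_k\underline H]$ term via $\underline H^2=1$. The only cosmetic difference is that you show the integrand vanishes pointwise, whereas the paper only needs the integral to vanish.
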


\begin{proof}
We have,
\begin{align*}
    \oint \mathrm{tr}[P_C \underline H \partial_k \underline H] dk
    &= \oint \mathrm{tr}[P_C S \underline H \partial_k \underline H S^\dagger] dk \\
    &= \oint \mathrm{tr}[S^\dagger P_C S \underline H \partial_k \underline H] dk \\
    &= - \oint \mathrm{tr}[P_C \underline H \partial_k \underline H] dk,
\end{align*}
where we have used $S^\dagger P_C S = 1-P_C$, and the fact that $\oint \tr[\underline H \partial_k \underline H]dk = \frac12\oint \mathrm{tr}[\partial_k \underline H^2] = 0$. Thus, $w[H, C] = -w[H, C] =0$.
\end{proof}

\begin{WNcr}\label{cr:C1C2}
The winding number vanishes for a Hamiltonian with two anticommuting chiral operators, with respect to both chiral operators.
\end{WNcr}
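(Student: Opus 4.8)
The plan is to reduce the statement to Theorem~\ref{th:uniS} by exhibiting, for a Hamiltonian $H(\vex k)$ carrying two anticommuting chiral operators $C_1$ and $C_2$ (so $\{H,C_1\}=\{H,C_2\}=0$ and $\{C_1,C_2\}=0$, with each $C_a=C_a^\dagger=C_a^{-1}$), a single unitary symmetry that anticommutes with whichever chiral operator's winding we wish to kill. The natural candidate is the product $S=C_1C_2$, which is unitary as a product of unitaries. The whole argument then hinges on checking two algebraic facts about $S$: that it is a symmetry of $H$, and that it anticommutes with each $C_a$ individually.

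First I would verify that $S$ is a symmetry, $SHS^\dagger=H$. Writing $(C_1C_2)^\dagger=C_2C_1$ and applying the two chiral relations in turn, $C_1C_2\,H\,C_2C_1=C_1(C_2HC_2)C_1=-C_1HC_1=H$, so $[S,H]=0$. Next I would check the anticommutation with each chiral operator using $\{C_1,C_2\}=0$ and $C_a^2=1$: one finds $SC_1=C_1C_2C_1=-C_2$ while $C_1S=C_2$, so $\{S,C_1\}=0$, and symmetrically $SC_2=C_1$ while $C_2S=-C_1$, so $\{S,C_2\}=0$. Thus $S=C_1C_2$ is a unitary symmetry of $H$ that anticommutes with $C_1$ and, separately, with $C_2$.

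Finally, applying Theorem~\ref{th:uniS} with symmetry $S$ and chiral operator $C_1$ yields $w[H,C_1]=0$, and applying it again with the \emph{same} $S$ and chiral operator $C_2$ yields $w[H,C_2]=0$, which is the claim. There is essentially no obstacle here beyond identifying the correct operator; the only point requiring a moment's care is that $S=C_1C_2$ is anti-Hermitian ($S^\dagger=-S$) and hence not itself a chiral operator, but Theorem~\ref{th:uniS} only requires a unitary symmetry, not a Hermitian one, so its hypothesis is satisfied. (Equivalently, one may use $iS$, which is Hermitian and unitary and obeys the same (anti)commutation relations, since an overall phase does not affect them.)
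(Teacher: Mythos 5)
Your proposal is correct and is exactly the paper's argument: the paper's proof of this corollary is the one line ``Take $S=C_1C_2$ and use Theorem~\ref{th:uniS}.'' You have simply filled in the routine algebra verifying that $S$ is a unitary symmetry of $H$ anticommuting with each $C_a$, which the paper leaves implicit.
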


\begin{proof}
    Take $S = C_1 C_2$ and use Theorem~\ref{th:uniS}.
\end{proof}

\begin{WNth}\label{th:antiS}
The winding number vanishes for a Hamiltonian with an antiunitary symmetry, $\Sigma H(\vex k) \Sigma^\dagger = H(\vex k)$, that commutes with chiral operator.
\end{WNth}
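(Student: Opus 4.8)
The plan is to work from the trace formula for the winding number proved in Corollary~\ref{cr:PHdH},
\begin{equation*}
w[H,C] = \frac1{2\pi i}\oint \mathrm{tr}\left[P_+\,\underline H(k)\,\partial_k \underline H(k)\right] dk,
\end{equation*}
and to show that the antiunitary symmetry forces the integrand to be \emph{real}, whereas the quantization of $w$ forces the integral to be \emph{purely imaginary}; the two facts together leave only $w=0$. The only input beyond the trace formula is elementary antilinear bookkeeping: writing $\Sigma=UK$ with $U$ unitary, $K$ complex conjugation in the working basis, and $\Sigma^\dagger=\Sigma^{-1}$, one has $\Sigma A\Sigma^{-1}=U\bar A U^\dagger$ for any linear operator $A$, and hence the key identity $\mathrm{tr}[\Sigma A\Sigma^{-1}]=\overline{\mathrm{tr}[A]}$.

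Next I would verify that $\Sigma$ leaves each factor of the integrand invariant. Since $\Sigma H(k)\Sigma^{-1}=H(k)$ and $\underline H=\mathrm{sgn}(H)$ is a real function of $H$, we get $\Sigma\,\underline H\,\Sigma^{-1}=\underline H$; because $\Sigma$ is independent of $k$, differentiation then gives $\Sigma\,\partial_k\underline H\,\Sigma^{-1}=\partial_k(\Sigma\underline H\Sigma^{-1})=\partial_k\underline H$; and because $[\Sigma,C]=0$ we have $\Sigma P_+\Sigma^{-1}=\tfrac12(1+\Sigma C\Sigma^{-1})=P_+$. Therefore $\Sigma\,(P_+\underline H\,\partial_k\underline H)\,\Sigma^{-1}=P_+\underline H\,\partial_k\underline H$, and applying the trace identity yields
\begin{equation*}
\mathrm{tr}\!\left[P_+\underline H\,\partial_k\underline H\right]=\overline{\mathrm{tr}\!\left[P_+\underline H\,\partial_k\underline H\right]},
\end{equation*}
so the integrand is real at every $k$, and the integral is therefore real.

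Finally I would invoke quantization. By its original definition through $\partial_k\log\det d$, the winding number is a real integer, so $\oint\mathrm{tr}[P_+\underline H\,\partial_k\underline H]\,dk=2\pi i\,w[H,C]$ is purely imaginary. An integral that is at once real (previous step) and purely imaginary must vanish, whence $w[H,C]=0$.

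The step demanding the most care—though it is conceptual rather than computational—is the antilinearity bookkeeping: the identity $\mathrm{tr}[\Sigma A\Sigma^{-1}]=\overline{\mathrm{tr}[A]}$ and the invariance of $\partial_k\underline H$ under conjugation by the $k$-independent antilinear $\Sigma$. The crux is that here $\Sigma$ \emph{commutes} with $C$ (unlike the unitary case of Theorem~\ref{th:uniS}, where anticommutation produces a sign flip), so $P_+$ is preserved unchanged and the vanishing arises instead from the clash between reality and quantization. As an independent cross-check I would also run the argument directly in the chiral basis, where $[\Sigma,C]=0$ makes $U$ block diagonal; then $\Sigma H\Sigma^{-1}=H$ gives $d(k)=U_+\,\overline{d(k)}\,U_-^\dagger$, so $\det d(k)=e^{i\phi}\,\overline{\det d(k)}$ with a $k$-independent phase $\phi$, which pins $\arg\det d(k)$ to a constant and kills the winding.
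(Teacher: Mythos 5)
Your proof is correct and follows essentially the same route as the paper's: both use the trace formula of Corollary~\ref{cr:PHdH}, show via the antilinear trace identity and $[\Sigma,C]=0$ that the integral equals its own complex conjugate (hence is real), and then conclude $w=-w^*=0$ from the reality of the winding number. Your factor-by-factor conjugation of the integrand and the paper's use of $\mathrm{tr}[\Sigma_1\Sigma_2]=\mathrm{tr}[\Sigma_2\Sigma_1]^*$ are the same bookkeeping in slightly different order, and your chiral-basis cross-check via $\det d=e^{i\phi}\,\overline{\det d}$ is a valid independent confirmation.
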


\begin{proof}
We have
\begin{align*}
    \oint \mathrm{tr}[P_C \underline H \partial_k \underline H] dk
    &= \oint \mathrm{tr}[P_C \Sigma \underline H \partial_k \underline H \Sigma^\dagger] dk \\
    &= \oint \mathrm{tr}[\underline H \partial_k \underline H \Sigma^\dagger P_C \Sigma ]^* dk \\
    &= \oint \mathrm{tr}[P_C \underline H \partial_k \underline H]^*dk,
\end{align*}
where have used the fact that for two antilinear operators $\Sigma_1$ and $\Sigma_2$, $\mathrm{tr}[\Sigma_1\Sigma_2] = \mathrm{tr}[\Sigma_2\Sigma_1]^*$. Since the winding number is real, $w[H, C] = -w[H,C]^* = 0$.
\end{proof}

\vspace{-2.5mm}
For our Hamiltonian~\eqref{eq:H0}, there is an antiunitary symmetry $\Sigma = I\Theta$, the product of time-reversal operator and inversion $I=iM_2M_1$ that commutes with the chiral operator. Therefore, all winding numbers $w_\ell[H,C]$ vanish.

\begin{WNth}
For a symmetry $M H(q)M = H(Mq)$, $M^2 = 1$ that commutes or anticommutes with the chiral symmetry, respectively,
\begin{equation}
  w_\ell[H,C] =  \pm w_{M\ell}[H,C].
\end{equation}
\end{WNth}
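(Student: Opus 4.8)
The plan is to base everything on the trace formula of Corollary~\ref{cr:PHdH},
\begin{equation*}
w_\ell[H,C] = \frac{1}{2\pi i}\oint_\ell \tr\!\big[P_C\,\underline H(k)\,\partial_k\underline H(k)\big]\,dk, \qquad P_C=\tfrac12(1+C),
\end{equation*}
since it is manifestly basis-free and behaves simply under conjugation by the constant (unitary, $M^{-1}=M$) operator $M$. First I would record how the flattened Hamiltonian transforms: from $MH(q)M=H(\mathsf M q)$ one sees that $M$ sends eigenstates of $H(q)$ to eigenstates of $H(\mathsf M q)$ at the same energy, hence $MP(q)M=P(\mathsf M q)$ for the occupied projector and therefore $M\underline H(q)M=\underline H(\mathsf M q)$.

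Next I would parametrize $\ell$ by a curve $\gamma(t)$ and define $M\ell$ through the pushforward $\mathsf M\circ\gamma(t)$, so that
\begin{equation*}
w_{M\ell}[H,C]=\frac{1}{2\pi i}\int \tr\!\Big[P_C\,\underline H(\mathsf M\gamma)\,\tfrac{d}{dt}\underline H(\mathsf M\gamma)\Big]\,dt.
\end{equation*}
Substituting $\underline H(\mathsf M\gamma)=M\underline H(\gamma)M$ and differentiating with $M$ held fixed gives the integrand $\tr[P_C\,M\underline H\,\dot{\underline H}\,M]$, which by cyclicity of the trace and $M^2=1$ equals $\tr[\,MP_CM\,\underline H\,\dot{\underline H}\,]$. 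Expressing $M\ell$ this way automatically absorbs the chain-rule Jacobian of the reflection and its induced orientation, so no sign enters beyond the algebra of $M$ and $C$.

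The final step is to evaluate $MP_CM=\tfrac12(1+MCM)$. If $[M,C]=0$ then $MCM=C$, so $MP_CM=P_C$ and the integrand coincides with that of $w_\ell$, giving $w_{M\ell}=w_\ell$. If $\{M,C\}=0$ then $MCM=-C$, so $MP_CM=1-P_C$ and the integrand becomes $\tr[\underline H\dot{\underline H}]-\tr[P_C\underline H\dot{\underline H}]$; here I would use $\oint\tr[\underline H\partial_k\underline H]\,dk=\tfrac12\oint\tr[\partial_k\underline H^2]\,dk=0$ (already invoked in the proof of Theorem~\ref{th:uniS}) to discard the first term, leaving $w_{M\ell}=-w_\ell$. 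The two cases combine to $w_\ell[H,C]=\pm w_{M\ell}[H,C]$ with the upper (lower) sign for commuting (anticommuting) $M$.

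The one genuine pitfall I expect is orientation bookkeeping: because $\mathsf M$ flips the reflected momentum component, a set-based reading of ``$M\ell$'' would reverse its traversal and threaten a spurious extra minus sign. Committing to the pushforward parametrization $\mathsf M\circ\gamma$ resolves this, and I would sanity-check the outcome against the separable case, where $\det d$ factorizes and the commuting/anticommuting dichotomy can be read off directly from the phase winding of each factor.
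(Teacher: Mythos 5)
Your proposal is correct and follows essentially the same route as the paper's proof: both start from the trace formula of Corollary~\ref{cr:PHdH}, use $M\underline H(q)M=\underline H(\mathsf M q)$ to change variables from $\ell$ to $M\ell$, reduce the integrand to $\tr[MP_CM\,\underline H\,\partial_k\underline H]$, and then split into the commuting/anticommuting cases via $MP_CM=P_C$ or $1-P_C$ together with $\oint\tr[\underline H\,\partial_k\underline H]\,dk=0$. Your explicit attention to the orientation of $M\ell$ via the pushforward parametrization is a useful clarification of a point the paper leaves implicit, but it does not change the argument.
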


\begin{proof}
Noting $q\in \ell \mapsto Mq\in M\ell$ we have
\begin{align*}
w_\ell[H,C] 
	&= \frac1{2\pi i} \oint_{M\ell} \tr[P_C M \underline H M \partial_k M\underline H M] dk \\
	&= \frac1{2\pi i} \oint_{M\ell} \tr[M P_C M \underline H \partial_k \underline H] dk \\
	&= \pm \frac1{2\pi i} \oint_{M\ell} \tr[P_C \underline H \partial_k \underline H] dk \\
	&= \pm w_{M\ell}[H,C],
\end{align*}
where, in the penultimate line we have used $MP_CM = P_C$ for commuting symmetry and $MP_CM = 1-P_C$ for anticommuting symmetry, respectively, and $\oint \tr[\underline H \partial_k \underline H]dk = 0$.
\end{proof}

\begin{WNcr}\label{cr:MC}
The winding number vanishes for a Hamiltonian with a symmetry that commutes or anticommutes with the chiral operator, respectively, on a loop that is inverted or invariant, respectively, under symmetry operation.
\end{WNcr}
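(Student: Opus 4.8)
The plan is to read off the corollary directly from the preceding theorem, combined with the elementary fact that reversing the orientation of a loop flips the sign of the winding number. Since $w_\ell[H,C]$ is an oriented line integral, $\oint_{\ell^{-1}} = -\oint_\ell$ and hence $w_{\ell^{-1}}[H,C] = -w_\ell[H,C]$, where $\ell^{-1}$ denotes $\ell$ traversed backwards.

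With this in hand I would split into the two cases named in the statement. In the commuting case the symmetry inverts the loop, i.e.\ $M\ell = \ell^{-1}$ as an oriented loop; the preceding theorem then gives $w_\ell[H,C] = + w_{M\ell}[H,C] = w_{\ell^{-1}}[H,C] = - w_\ell[H,C]$, forcing $w_\ell[H,C] = 0$. In the anticommuting case the symmetry leaves the loop invariant, $M\ell = \ell$ with the same orientation, and the theorem supplies the opposite sign, $w_\ell[H,C] = - w_{M\ell}[H,C] = - w_\ell[H,C]$, again giving $w_\ell[H,C] = 0$. In both cases the extra minus sign --- from orientation reversal in the first case and from the anticommutator in the second --- conspires with the sign furnished by the theorem to yield $w_\ell = -w_\ell$.

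The only step requiring genuine care, and the one I would flag as the main obstacle, is pinning down the precise meaning of ``inverted'' versus ``invariant'' as statements about \emph{oriented} loops and verifying that the relevant geometric configurations realize them. For a mirror $\mathsf{M}_j$ and the loop $\ell_j$ running along $k_j$, which crosses the fixed lines $k_j \in \{0,\pi\}$ of $\mathsf{M}_j$, an explicit parametrization $q \mapsto \mathsf{M}_j q$ shows that the image is the same point set traversed backwards, so this loop is inverted; by contrast, the loop $\ell_{\bar j}$ based at a mirror-invariant momentum $k_{*j} \in \{0,\pi\}$ lies within a fixed line of $\mathsf{M}_j$ and maps to itself with the same orientation, so it is invariant. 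Settling this orientation bookkeeping by a direct parametrization is the substantive content; once it is in place, the algebra above closes the proof in a single line.
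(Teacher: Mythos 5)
Your argument is correct and is essentially identical to the paper's one-line proof: both invoke the preceding theorem's sign $w_\ell[H,C]=\pm w_{M\ell}[H,C]$ together with the orientation-reversal identity $w_{-\ell}=-w_\ell$, yielding $w_\ell=-w_\ell=0$ in each case. Your extra care about what ``inverted'' versus ``invariant'' means for oriented loops is consistent with how the paper applies the corollary (e.g.\ $m_{12}\ell_2=-\ell_2$), so nothing further is needed.
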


\begin{proof}
We have, respectively $M\ell = \mp \ell$. Thus, $w_\ell[H,C] = \pm w_{\mp\ell}[H,C] = - w_\ell[H,C] = 0$.
\end{proof}

For our Hamiltonian~\eqref{eq:H0}, we have $m_{12}c_1 = c_1 m_{12}$ and $m_{12}\ell_2 = -\ell_2$, thus $w_{\ell_2}[h_1,c_1] = 0$. Similarly, $m_{21}c_2 = c_2 m_{21}$ and $m_{21}\ell_1 = -\ell_1$, thus $w_{\ell_1}[h_2,c_2] = 0$.

\bibliography{refs}

\end{document}